\documentclass[trsc,dblanonrev]{informs4}

\OneAndAHalfSpacedXI

\usepackage{amsmath,amssymb,amsfonts,mathtools}
\usepackage{comment}
\usepackage{tabularx} 

\RequirePackage{endnotes}
\usepackage{anyfontsize}
\usepackage{enumerate}

\usepackage{natbib}
 \bibpunct[, ]{(}{)}{,}{a}{}{,}%
 \def\bibfont{\small}%

\usepackage{rotating}
\usepackage{fancyvrb}

\TheoremsNumberedThrough     
\ECRepeatTheorems
\JOURNAL{XXX}
\newtheorem{corollary}{Corollary}[proposition]

\EquationsNumberedThrough    

\begin{document}




\TITLE{\large How bad is time variability for users in mobility services? An economic framework under expected- and non-expected utility}

\ARTICLEAUTHORS{%
\AUTHOR{David Z.W. Wang \textsuperscript{a}, Zhaoqi Zang \textsuperscript{a}*, Xiangdong Xu \textsuperscript{b} and Shaojun Liu \textsuperscript{a}}
\AFF{\textsuperscript{a} School of Civil and Environmental Engineering, Nanyang Technological University, 50 Nanyang Avenue, Singapore 639798, Singapore, \\
\textsuperscript{b} College of Transportation, Tongji University, Shanghai, China,\\
\EMAIL{* Corresponding author. zhaoqi.zang@ntu.edu.sg}; wangzhiwei@ntu.edu.sg;xiangdongxu@tongji.edu.cn;lius0075@e.ntu.edu.sg}


}

\ABSTRACT{\normalsize Time variability is a pervasive feature of mobility services and a major source of welfare loss for users. Although a substantial literature has quantified the cost of time variability (COTV), it remains theoretically unclear \textit{how bad time variability can be for users in the worst case}. In the absence of such a worst-case benchmark, quantified variability costs lack a principled reference for assessing whether they are economically meaningful or negligible. Meanwhile, such a benchmark is critical for strategic prioritization in transport appraisal, service design, and pricing—particularly in early-stage decision making, where detailed valuation is often infeasible. To fill this gap, this paper first develops an economic framework to quantify the cost of time (COT) and the COTV under expected utility model, and to establish theoretical upper bounds on the ratio of COTV to COT. For users with quadratic utility, we show that $\text{COTV}/\text{COT} \le \tfrac{1}{2}CV^{2}$, where $CV$ is the coefficient of variation of service time. When the service process follows a Poisson process, a commonly adopted assumption, this bound simplifies to $\text{COTV}/\text{COT} \le \tfrac{1}{2}$, implying that time variability can inflate a user's total time cost of taking a mobility service by up to 50\%. In more general settings, the ratio depends on three interpretable factors: $CV$ and users' second- and third-order risk preferences, captured by relative risk aversion (RRA) and relative prudence (RP). We identify benchmark values of RRA and RP that characterize users' preferences over mean-, variance-, and skewness-related reductions. Finally, we extend our framework to non-EU models, including dual theory and rank-dependent utility, demonstrating that analogous structural decompositions arise under DT and RDU, although the relevant variability measures differ across preference models.
}
\KEYWORDS{\small cost of time variability; risk aversion; prudence; expected utility; dual theory; rank dependent utility}

\maketitle

\section{Introduction}
The economic costs of time variability—the inconsistency or unpredictability of transport or mobility service times—are profound \textcolor{blue}{\citep{winston2013performance}}. For U.S. motorists, truckers, and shippers, increased delays due to time variability resulted in a \$45 billion welfare loss \textcolor{blue}{\citep{langer2008toward}}, while in U.S. domestic air travel, arrival delay variability doubled users' expected costs \textcolor{blue}{\citep{koster2016user}}. Moreover, time variability significantly affects users' satisfaction and loyalty across various mobility services and systems, including public transport, online delivery or ridesourcing, and tolling or charging services \textcolor{blue}{\citep[e.g.,][]{yin2009dynamic,jung2014stochastic,ansari2021waiting,xu2021longer,yu2022delay}}. Recognizing the inevitability of time variability, users are willing to pay to reduce it (see \textcolor{blue}{\cite{carrion2012value}}), prompting service providers to improve reliability and regulators to require the inclusion of time variability costs in project appraisals \textcolor{blue}{\citep{NZTA2010economic,de2015including}}. Consequently, quantifying the cost of time variability (COTV) is of interest to all stakeholders, including users, service providers, regulators, and the society. Accordingly, a wide range of empirical and theoretical methods have been therefore developed to quantify COTV  \textcolor{blue}{\citep[see reviews e.g.,][]{li2010willingness,carrion2012value,zang2022ttrb}}. Together, these studies have substantially advanced our understanding of how variability affects user' behavior.

However, quantifying COTV alone, often referred to as the value of reliability (VOR) in the literature, only addresses the question of \textit{How much does it cost users due to time variability}. In other words, the VOR studies do not answer another fundamental question: \textit{How bad can time variability be for users in the worst case}? Without a theoretical benchmark, the estimated COTVs lack an interpretable scale: while they indicate how much users are willing to pay for reliability, they do not reveal whether an estimated variability cost is economically meaningful or negligible. This limitation is especially consequential in early-stage decision-making, where reliability improvement investments (such as increasing road capacity and building dedicated lanes or a new road) must be screened and prioritized before a detailed evaluation is feasible.

This paper fills this gap by developing a framework to evaluate how large the economic inefficiency induced by time variability can be, relative to the cost users already incur from mean time. Specifically, we first model how users respond to time variability using risk premium approach. We then adopt the expected utility (EU) framework to estimate the cost of time variability (COTV) relative to the cost of time (COT), where the latter represents users’ cost under an otherwise identical deterministic service. The ratio of the COTV to COT is used as a normalized measure of the severity of time variability. Furthermore, we study marginal benefits by comparing the returns to reductions in time variability and mean service time via the ratio of the value of time variability to the value of time, capturing how users value a unit decrease in variability relative to a unit decrease in mean service time.

We show that under a commonly adopted setting for mobility services, where users' utility function is quadratic, and its service process follows a Poisson process, the COTV is at most $1/2$ of the COT. Consequently, the total cost for users under time variability is bounded by 1.5 times that of an otherwise identical deterministic service. This simple yet powerful bound provides the first theoretical “worst-case” benchmark for COTV. More generally, if we relax the Poisson process assumption and retain only the quadratic utility assumption, the ratio satisfies $COTV/COT \le 1/2 CV^2$ where $CV$ is the coefficient of variation of service time. This bound is appealing as it depends linearly only on readily observable characteristics of time variability (i.e., $CV^2$) for arbitrary risk preference parameters. As such, it is particularly useful for the early-stage decision-making related to strategic prioritization in transport appraisal, service design, and pricing. 

 
In an even more general setting where users' utility functions are only assumed to be differentiable, we show that the ratio of COTV to COT depends on three interpretable factors: the degree of time variability (captured by CV), users’ second-order risk preference (captured by relative risk aversion, RRA), and third-order risk preference (captured by relative prudence, RP). We identify benchmark values of RRA and RP that classify users according to their preferences for reducing the mean, variance, and skewness of random mobility service times. Intuitively, reduction in mean time makes trips faster, reduction in variance decreases spread of travel times around mean, and reduction in skewness protects users against rare but severe delays. This preference-based classification is of clear practical relevance, as identifying user types is essential for effectively managing time variability in service systems \textcolor{blue}{\citep{yu2018managing}}. 

Finally, recognizing the well-known limitations of expected utility (EU) models, such as common-ratio and framing effects, we extend our analysis to non–EU frameworks, including dual theory (DT; \textcolor{blue}{\cite{yaari1987dual}}) and rank-dependent utility (RDU; \textcolor{blue}{\cite{quiggin1982theory}}). Although these frameworks differ in their modeling primitives, the structural form of the COTV–COT ratio remains unchanged, continuing to depend on the degree of time variability and users’ risk preferences. The distinction lies in representation: under EU, variability and preferences are characterized in the payoff (i.e., mobility service time) plane via space through moments of time and derivatives of the utility function, whereas under non-EU models they are captured in the probability plane through dual moments and probability-weighting derivatives.

In summary, the \textbf{main contribution} of this paper is to provide a theoretical answer to how bad time variability can be for users. We show that the economic impact of time variability is inherently bounded and depends on both the degree of variability and users’ risk preferences, thus placing variability costs within a clear and interpretable benchmark relative to time costs. Taken together, these results complement existing valuation methods by offering a data-light benchmark for early-stage decision-making, as well as practical guidance for transport appraisal, reliability-oriented pricing, and the design of mobility services under uncertainty.


The remainder of this paper is organized as follows. Section 2 reviews time variability and its quantification. Section 3 models users’ response to time variability under EU framework. Section 4 quantifies the COTV and defines the ratios to assess how bad time variability can be. Section 5 gives theoretical results and discussions. Section 6 extends the analysis to non–EU frameworks. Section 7 concludes the paper.

\section{Literature review}
This section first reviews extensive evidence on the importance and consequences of time variability, and then introduces methods for evaluating the cost of time variability.

\subsection{Time variability in mobility service systems}
There are many forms of time variability in mobility service systems, most notably variability in travel times, waiting times, and queuing or service times.

Travel time variability (TTV) has long been recognized as an important topic since \textcolor{blue}{\cite{herman1974trip}}. Moreover, the rare events included in TTV may cause much more serious delays as expected by travelers, which is at most five times than that of common condition \textcolor{blue}{\citep{van2008travel}}. 
Reducing the TTV can therefore generate substantial benefits, often comparable to reducing travel time, as reported by many studies \textcolor{blue}{\citep{franklin2009travel,carrion2012value,devarasetty2012value,zang2024relia}}. This underscores the need to quantify the cost of TTV \textcolor{blue}{\citep{NZTA2010economic,de2015including,zang2024value}}. Waiting time variability plays a critical role in shaping the attractiveness and perceived service quality of mobility service in transport systems, including public transit \textcolor{blue}{\citep{cats2017modeling,ansari2021waiting}}, airports \textcolor{blue}{\citep{gkritza2006airport}}, and online delivery or ride-sourcing platforms \textcolor{blue}{\citep{xu2021longer,yu2022delay,cui2023sooner}}. For instance, \textcolor{blue}{\cite{fan2016waiting}} reported that passengers’ perceived waiting time was on average 1.21 times their actual waiting time. Consequently, waiting time is often regarded as the dominant component of generalized cost in public transport \textcolor{blue}{\citep{ansari2021waiting}}, and thus serves as a key performance indicator of service attractiveness \textcolor{blue}{\citep{esfeh2022waiting}}. Queuing and service time variability typically emerges when demand persistently exceeds service capacity of transport systems. An example is severe congestion at corridor bottlenecks—such as toll stations—during peak hours \textcolor{blue}{\citep{moghaddam2017smart}}, where long and uncertain queues significantly reduce operational efficiency. Beyond road networks, electric vehicle users often experience substantial and unpredictable queuing delays at urban public charging stations \textcolor{blue}{\citep{jung2014stochastic}}, exacerbating their range anxiety \textcolor{blue}{\citep{noel2019fear}}. 

The widely existing time variability and its significant effects highlight the necessity of understanding users’ behavioral responses to variability and further evaluating its cost for users or systems.

\subsection{Modeling the cost of time variability }

A notable theoretical achievement for modeling the time variability of transportation systems is the bottleneck model developed by \textcolor{blue}{\cite{vickrey1969congestion}}. \textcolor{blue}{\cite{small1982scheduling}} introduced the schedule cost into the bottleneck model and proposed the expected utility model, which assumes that travelers aims to maximize their trip utility in their trip scheduling and their disutility comes from not arriving at the preferred arrival time on time. Specifically, under this schedule delay framework, the concept of the monetary value of travel time reliability, namely the VOR, and reliability ratio approach have been developed to quantitatively assess the cost of travel time variability (TTV) \textcolor{blue}{\citep{small1982scheduling, noland1995travel, bates2001valuation,fosgerau2010value,jenelius2012vor,li2012embedding,engelson2016cost,zang2024value}}. This line of research is rooted in schedule-based travel behavior and variability causes cost because it generates earliness and lateness penalties, so we often refer to it as the \textit{scheduling delay} approach.

Another line of research on modeling the cost of time variability (COTV) is the \textit{Bernoulli approach}. The Bernoulli approach is also grounded in expected utility theory, but it explicitly attributes disutility to variability itself. Under this framework, travelers dislike time variability because they are risk-averse toward uncertain travel times, implying that greater risk aversion leads to higher cost of variability. Specifically, with the theoretical foundation of the so-called safety margin provided by  \textcolor{blue}{\cite{gaver1968headstart}} and \textcolor{blue}{\cite{knight1974approach}}, \textcolor{blue}{\cite{jackson1982empirical}} successfully used mean-variance preferences over random travel times to explain the choice among risky trips. In the context of a general travel choice model, \textcolor{blue}{\cite{senna1994influence}} further made notable advances in deriving measures of both the value of time and the value of travel time variability. Recently, \textcolor{blue}{\cite{beaud2016impact}} and \textcolor{blue}{\cite{zang2024relia}} used the Bernoulli approach to model the cost of travel time variability and explored the impact of travel time variability and travelers’ risk attitudes on it.

Since our goal is to quantify how much users suffer from time variability, we focus on the cost generated by variability itself. Accordingly, we adopt the Bernoulli approach to model the COTV. To the best of our knowledge, the current literature in quantifying the COTV has focused mainly on deriving its formulations and examining the factors influencing it \textcolor{blue}{\citep[See reviews e.g.,][]{carrion2012value, zang2022ttrb}}. However, no studies have ever established an upper bound on the ratio of COTV to COT. As a result, how severe time variability can be for users remains an open question. The studies most closely related to our work are those on the VOR and the reliability ratio, from which our focus differs in several respects. First, the reliability ratio captures the \textit{marginal} value of reducing \textit{travel time} variability per unit of travel time, whereas we study the \textit{total} cost attributable to time variability. Second, recent advances in the VOR mainly express it as a function of users’ preferences and a variability measure, and then estimate it empirically using stated- or revealed-preference data; see the reviews in \textcolor{blue}{\cite{li2010willingness}} and \textcolor{blue}{\cite{carrion2012value}}. Such empirical quantification, however, lacks a theoretical benchmark for assessing whether a quantified variability cost is economically large or negligible. In other words, while VOR indicates users’ willingness to pay for reliability, it does not reveal how severe time variability can be relative to mean time in \textit{worst-case scenarios}. In contrast, a theoretical benchmark, specifically an upper bound on the ratio of COTV to COT, is particularly valuable for early-stage decision-making. For example, it benefits screening and prioritizing reliability-improvement options (e.g., capacity expansions or constructing a new road) before detailed project-level appraisal is feasible. Finally, most current VOR studies are based on the EU framework, which has well-known limitations (e.g., common-ratio and framing effects), whereas we extend our analysis to non-EU frameworks.

\section{A theoretical framework for modeling users' responses to time variability} \label{sec2}
This section introduces mobility service users' utility function of time and their risk attitudes, and models their responses to time variability, thereby providing a theoretical foundation for the subsequent analysis. 

For a transport or mobility service, we use lowercase $t \in \mathbb{R}_{+}$ to represent its service time\endnote{Note that $t$ can represent various realistic duration time involved in the mobility services; its specific terminology depends on specific contexts, e.g., travel time, waiting time, queuing time, and service time.} and uppercase $T$ to represent a constant value of $t$. Let $\mu$, $\sigma$, $f(t)$, and $F(t)$ denote its mean, standard deviation, probability density function (PDF), and cumulative distribution function (CDF), respectively. Throughout this paper, we refer to the single $(T)$ as an \textit{instance without time variability} with utility $u(T)$; the triple $(t, \mu,\sigma)$ referred to as an \textit{instance with time variability} and and its expected utility is denoted as $\mathbb{E}[u(t, \mu,\sigma)]$. For notational convenience, we use $(t)$ to denote the instance $(t, \mu,\sigma)$ when no ambiguity arises.

\subsection{Utility function of time and risk attitudes} \label{sec2.1}
In the context of economic risks, individuals generally prefer larger amounts of wealth $s \in  \mathbb{R}_{+}$ because it typically yields positive utility to them. Therefore, the utility function of wealth, denoted by $u(s)$, is assumed to be a non-decreasing function, i.e.,  $u'(s) \geq 0$. In contrast, as transport demand is derived from the need to move people or goods, time spent for mobility services typically generates negative utilities; users therefore dislike long service times. For example, passengers generally dislike longer waiting times for buses. As a result, the utility function of time is a non-increasing function, i.e.,  $u'(t) \leq 0$ for $t \in \mathbb{R}_{+}$. 

Building seminar works on risk attitudes \textcolor{blue}{\citep{pratt1964risk,arrow1965aspects,kimball1990precautionary}}, Definition \ref{RARPRiskAttitudes} formalizes the second-order risk preference (i.e., Pratt-Arrow risk aversion) and the third-order risk preference (i.e., prudence) for mobility service users with differentiable utility function. Interested readers can further refer to \textcolor{blue}{\cite{gollier2001economics}} and \textcolor{blue}{\cite{eeckhoudt2006putting}} for understanding risk aversion and prudence. 

\vspace{0.5em}
\begin{definition}[Risk attitudes]\label{RARPRiskAttitudes}
Given a mobility service user with a differentiable utility function of service time, the user is (i) \textit{risk-averse} if and only if the utility function is concave and (ii) \textit{prudent} if and only if the marginal utility function is convex.
\end{definition}
\vspace{0.5em}

Definition \ref{RARPRiskAttitudes} implies that, mathematically, risk aversion is equivalent to $u\left( \mathbb{E}\left[ t \right] \right) \ge \mathbb{E}\left[ u\left( t \right) \right]$, which in turn implies $u''(t) \le 0$. For this reason, risk aversion is referred to as a second-order risk preference. Prudence is equivalently characterized by $u^{\prime}\left( \mathbb{E}\left[ t \right] \right) \le \mathbb{E}\left[ u^{\prime}\left( t \right) \right]$, i.e., $u^{\prime\prime\prime}(t) \ge 0$, and therefore is termed as a third-order risk preference\endnote{This paper follows the original meaning of prudence introduced in economic risks \textcolor{blue}{\citep{eeckhoudt2006putting}}: agents raise savings for uncertainty in future incomes. That is, prudent users of mobility services prepare to pay additional time for variability in their service times. This is different with \textcolor{blue}{\cite{beaud2016impact}} that defined prudence by $u^{\prime\prime\prime}(t) \le 0$ under context of travel time variability from the perspective of disliking upside variability.}. In the context of time variability, risk aversion reflects users’ dislike of variability and their desire to avoid it whenever possible, whereas prudence captures users’ propensity to prepare for and protect themselves against rare but severe unfavorable realizations of random service time.


\subsection{Modeling users' responses to time variability} \label{sec2.2}
For a given mobility service, consider two instances: one determinstic benchmark without time variability $(T_0)$ and one stochastic instance with time variability $(t, T_0,\mathbb{E}[\tilde{\varepsilon}^2])$ by adding a zero-mean time variability $\tilde{\varepsilon}$ into  $(T_0)$. Figure \ref{figureServiceWithRiskEU} shows these two instances by using a discrete representation of the variability $\tilde{\varepsilon}$ for illustrative clarity; the analysis readily extends to continuous representations as $n \rightarrow \infty$. Throughout the paper, values shown along the lines (at the right end) in all figures represent probabilities (outcomes).
\begin{figure}[ht]
    \centering
    \vspace{-0.8em}
    \includegraphics[width=0.90\textwidth]{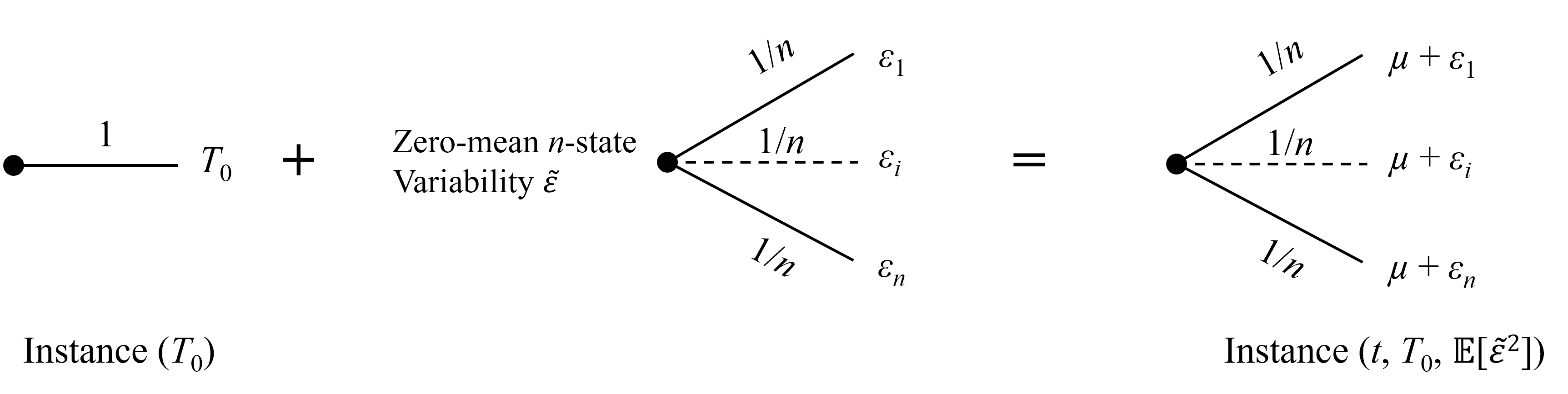}
    \vspace{-0.8em}
    \caption{Service instances without time variability $(T_0)$ and with time variability $(t, T_0,\mathbb{E}[\tilde{\varepsilon}^2])$.}
    \label{figureServiceWithRiskEU}
\end{figure}

For users who prefer certainty to variability, a common behavioral practice is to reserve additional time $\delta$ beyond $T_0$ to eliminate the impact of time variability. This practice corresponds to safety margin hypothesis of how users manage time variability, which has been widely recognized in both empirical and theoretical studies  \textcolor{blue}{\citep[e.g.,][]{gaver1968headstart,lo2006degradable, zang2024relia}}. Given that the utility function of time is non-increasing, the maximum $\delta$ should satisfy $u(T_0 + \delta_{\max}) = \mathbb{E}[u(t)]$. We can find that $\delta_{\max}$ identifies a utility-equivalent deterministic instance $(T_0 + \delta_{\max})$ that is indifferent to the instance with time variability $(t, T_0,\mathbb{E}[\tilde{\varepsilon}^2])$. Besides, $T_0$ and $\mathbb{E}[\tilde{\varepsilon}^2]$ are the mean and variance of the random variable $t$, namely $\mu =\mathbb{E}[t] = T_0$ and $\sigma^2 =\mathbf{Var}[T_0 +\tilde{\varepsilon}] = \mathbf{Var}[\tilde{\varepsilon}] = \mathbb{E}[\tilde{\varepsilon}^2]$. Therefore, we formally define such $\delta_{\max}$ as variability premium in Definition \ref{variability premium}, as our focus here is on mobility service time variability . This definition is consistent with the classical notion of risk premium in economic theory with focus on risks \textcolor{blue}{\citep{pratt1964risk,arrow1965aspects}}. Specifically, for a given mobility service, users are \textit{indifferent} between the instance with time variability $(t)$ and the newly identified equivalent instance without variability $(\mu + \pi)$, as shown by Figure \ref{figureIndifferentRisk}. 
\begin{figure}[ht]
    \centering
    \vspace{-0.8em}
    \includegraphics[width=0.65\textwidth]{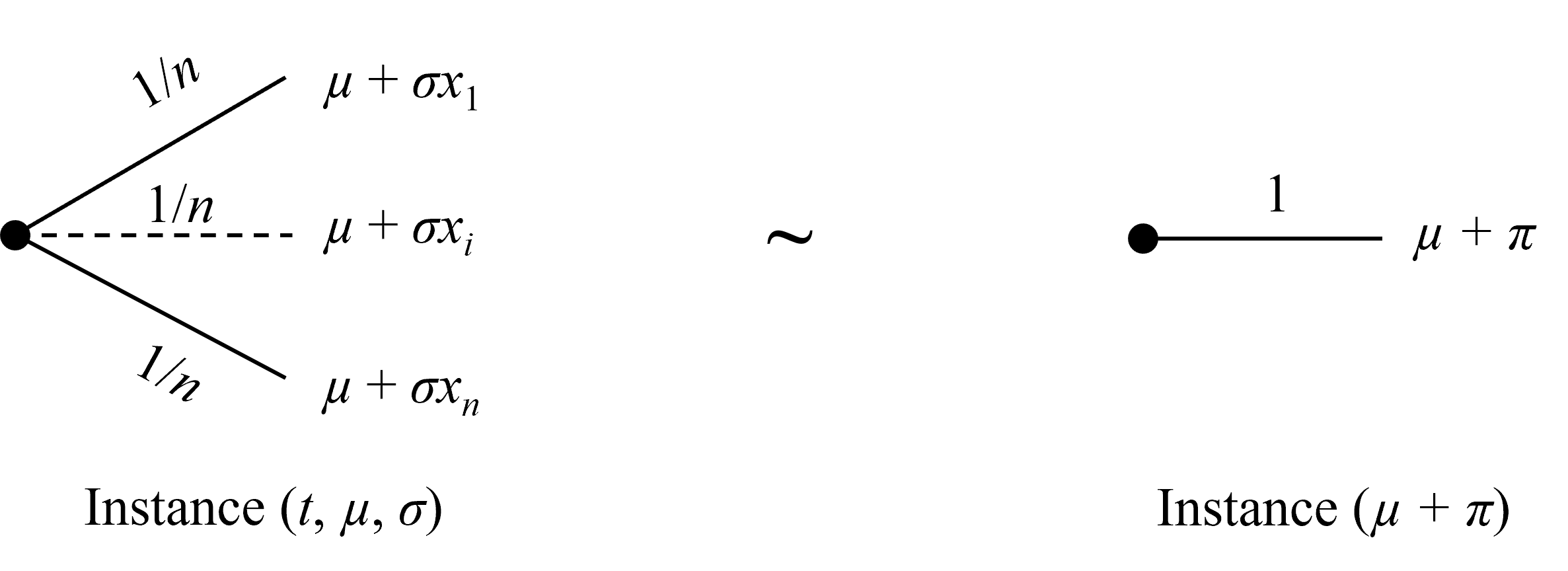}
    \vspace{-0.8em}
    \caption{Indifferent between two service instances and variability premium}
    \label{figureIndifferentRisk}
\end{figure} 

\vspace{-0.5em}
\begin{definition}[Variability premium]\label{variability premium}
    The variability premium $\pi$ measures the amount of additional time a user is willing to pay beyond the expected time to eliminate the impact of time variability in a service, which mathematically satisfies  $\mathbb{E}[u\left( t \right)] =u\left( \mu +\pi \right)$. 
\end{definition}
\vspace{0.5em}

Let $ t=\mu +\sigma x$, where $x$ is a standardized time variable with zero mean and $\mathbb{E}\left[ x \right] =\mathbb{E}\left[ {(t-\mu)}/{\sigma} \right] =0\,\,\text{and\,\,}\mathbb{E}\left[ x^2 \right] =\mathbb{E}\left[ {\left( t-\mu \right) ^2}/{\sigma ^2} \right] = 1$. Following the terminology of \textcolor{blue}{\cite{pratt1964risk}} in economics, $x$ can be viewed as the actuarially neutral variability, while the parameter $\sigma$ serves as a scale parameter capturing the magnitude of this variability. Proposition \ref{RP_local approxiamtion} formalizes the concept of the variability premium in the context of time variability, building on the seminal works of \textcolor{blue}{\cite{pratt1964risk}} and \textcolor{blue}{\cite{ arrow1965aspects}}, as well as subsequent developments in the transportation literature \textcolor{blue}{\citep{batley2007marginal,beaud2016impact,zang2024relia}}.



\begin{proposition}\label{RP_local approxiamtion}
    Consider a service whose service time $t$ is a random variable with mean $\mu$ and variance $\sigma^2$. For a user with utility function of $u(t)$, the variability premium is given by
    \begin{equation} \label{RP_local approxiamtion_EQ}
        \pi \simeq \frac{\sigma^2}{2} \frac{u^{\prime\prime}(\mu)}{u^\prime(\mu)} \notag,
    \end{equation}
\end{proposition}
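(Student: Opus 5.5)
This is the Arrow--Pratt local approximation adapted to a disutility of time. The sign comes out without a minus because here $u'\le 0$ and $\pi$ is added to $\mu$. I will Taylor-expand both sides of the defining identity in Definition \ref{variability premium}, $\mathbb{E}[u(t)] = u(\mu+\pi)$, around the mean $\mu$. Throughout I use the standardized representation $t=\mu+\sigma x$ with $\mathbb{E}[x]=0$ and $\mathbb{E}[x^2]=1$, and treat $\sigma$ as the small scale parameter, so the result is read as an approximation valid as $\sigma\to 0$.

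\textbf{Left-hand side.} Expand $u$ to second order around $\mu$:
\begin{equation*}
u(\mu+\sigma x) = u(\mu) + \sigma x\, u'(\mu) + \tfrac{1}{2}\sigma^2 x^2 u''(\mu) + o(\sigma^2).
\end{equation*}
Taking expectations and using the moment conditions on $x$ gives
\begin{equation*}
\mathbb{E}[u(t)] = u(\mu) + \tfrac{\sigma^2}{2}u''(\mu) + o(\sigma^2).
\end{equation*}

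\textbf{Right-hand side.} Expand only to first order, $u(\mu+\pi) = u(\mu) + \pi u'(\mu) + o(\pi)$. Equating the two sides yields $\pi u'(\mu) \simeq \tfrac{\sigma^2}{2}u''(\mu)$. Dividing by $u'(\mu)$, assumed nonzero (i.e.\ $u$ strictly decreasing at $\mu$), gives the stated formula. As a consistency check, a risk-averse user has $u''\le 0$ and $u'<0$, so the formula gives $\pi\ge 0$. This is the right sign: such a user accepts extra time to remove variability.

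\textbf{Main obstacle.} The step needing care is justifying that $\pi = O(\sigma^2)$, so that the higher-order terms $\pi^2 u''$ on the right and the remainders on the left are genuinely $o(\sigma^2)$ and may be dropped consistently. I will argue as follows.
\begin{itemize}
\item[(i)] Since $u$ is continuous and strictly monotone near $\mu$, $\pi$ is uniquely defined by $\pi = u^{-1}(\mathbb{E}[u(t)])-\mu$.
\item[(ii)] By the left-hand expansion, $\mathbb{E}[u(t)]-u(\mu) = O(\sigma^2)$.
\item[(iii)] By the mean value theorem, with $u'$ bounded away from zero near $\mu$, this gives $|\pi| = O(\sigma^2)$.
\end{itemize}
For the left-hand remainder, I will assume either that $u$ is three times differentiable with $\mathbb{E}|x|^3<\infty$, or that the support of $x$ is bounded. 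Either assumption makes the remainder $o(\sigma^2)$ uniformly. Under these standing regularity assumptions, which are implicit in the paper's use of ``$\simeq$'', the approximation holds to order $\sigma^2$.
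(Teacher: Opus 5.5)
Your proof is correct, but it takes a different route from the paper's.

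\textbf{What you do.} You use the classical Pratt expansion of $u$ in the outcome around $\mu$ on both sides of $\mathbb{E}[u(t)]=u(\mu+\pi)$. You truncate at second order on the left and first order on the right, and you justify that asymmetry with a separate a priori estimate, $\pi=O(\sigma^2)$.

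\textbf{What the paper does.} It never expands $u$ in the outcome directly. It treats the premium as a function of the scale parameter, $\pi=g_1(\sigma)$, and differentiates the identity $\mathbb{E}[u(\mu+\sigma x)]=u(\mu+g_1(\sigma))$ once and twice in $\sigma$, evaluating at $\sigma=0$. There, $\mathbb{E}[x]=0$ together with $u'(\mu)\neq 0$ forces $g_1'(0)=0$, and $\mathbb{E}[x^2]=1$ gives $g_1''(0)=u''(\mu)/u'(\mu)$. A second-order Taylor expansion of $g_1$ itself then yields the formula.

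\textbf{What each buys.} In the paper's approach, the order-$\sigma^2$ behaviour of $\pi$ falls out of $g_1(0)=g_1'(0)=0$, so no separate order estimate is needed. The same device is also reused for $VOT=g_2(\sigma)$ and $COTV=g_3(\sigma)$ in the next proposition, and it extends mechanically to higher orders. Its cost is that it silently assumes $g_1\in C^2$ and that differentiation may pass under the expectation (inverse function theorem plus a domination condition); the paper checks neither. Your version makes the error control explicit, which is its advantage.

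\textbf{Two small tightenings.} In step (iii), first note that $\pi\to 0$ by continuity of the local inverse of $u$. Only then does the mean-value point lie where $u'$ is bounded away from zero. Separately, ``three times differentiable with $\mathbb{E}|x|^3<\infty$'' does not by itself make the expected remainder $O(\sigma^3)$ on unbounded support, because $u'''$ at intermediate points can be unbounded there. You would need, for example, $u'''$ bounded on the range of $t$. Your bounded-support alternative works as stated with only $u\in C^2$ near $\mu$.
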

\begin{proof}
    {Proof.} 
    Since $\pi$ is developed to quantify the cost of time variability and $\sigma$ is a measure of the variability, it is natural to posit a functional relationship between $\pi$ and $\sigma$, i.e., $\pi = g_1(\sigma)$. Then, 
    \begin{equation}
        \mathbb{E}[u\left( \mu +\sigma x \right)] = u\left( \mu + g_1\left( \sigma \right) \right) \notag \label{EU_U_g_VP}.
    \end{equation}
    Differentiating it with respect to $\sigma$ once and twice yields, respectively 
    \begin{align}
        & \mathbb{E}\left[ xu'\left( \mu +\sigma x \right) \right] =u'\left( \mu +g_1\left( \sigma \right) \right) g_1'\left( \sigma \right)  \label{EU_U_g_1} \\
        & \mathbb{E}\left[ x^2u''\left( \mu +\sigma x \right) \right] =u''\left( \mu +g_1\left( \sigma \right) \right) \left( g_1'\left( \sigma \right) \right) ^2+g_1''\left( \sigma \right) u'\left( \mu +g_1\left( \sigma \right) \right) \label{EU_U_g_2}
    \end{align}
    Since the absence of time variability implies zero variability premium, so $\pi = g_1(0) = 0$ when $\sigma = 0$,. Moreover, recalling that $\mathbb{E}[x]=0$, it has $g_1'\left( 0 \right) = 0$ following from Eq. \eqref{EU_U_g_1}; and we further drive $g_1''\left( 0 \right) =\mathbb{E}\left[ x^2r\left( \mu \right) \right] ={u^{\prime\prime}(\mu)}/{u^\prime(\mu)}$ by evaluating Eq. \eqref{EU_U_g_2} at $\sigma = 0$. Using second-order Taylor expansion, we obtain 
    \begin{equation}
        \pi \simeq g_1(\sigma) = g_1(0) + g_1'\left( 0 \right) \sigma + \frac{g_1''\left( 0 \right)}{2} \sigma^2 = \frac{\sigma ^2}{2}\frac{u^{\prime\prime}(\mu)}{u^\prime(\mu)} \notag \label{RP}
    \end{equation}
    This completes the proof. $\square$
\end{proof}

It is worth to point out the term ${u^{\prime\prime}(\mu)}/{u^\prime(\mu)}$ in Proposition \ref{RP_local approxiamtion} corresponds to the coefficient of absolute risk aversion (ARA) at time $\mu$, which characterizes users' intensity of risk aversion. Another typical measure of the intensity of risk aversion is the coefficient of relative risk aversion (RRA). Let $A_2(t)$ and $R_2(t)$ denote ARA and RRA wherein the subscript 2 represents the 2nd-order risk preference, respectively, then\endnote{$A_2(t)$ and $R_2(t)$ are slightly different to their counterparts in the context of economic risks where $A_2(s) = -{ u''(s)}/{u'(s)}$, and $R_2(s) = -s{u''(s)}/{u'(s)}$. The negative sign in $A_2(s)$ and $R_2(s)$ is used to keep them positive as a risk averse customer has $u^{\prime\prime}(s) \le 0$ and $u^\prime(s) > 0$. But under time variability, a risk averse user has $u^{\prime\prime}(t) \le 0$ and $u^\prime(t) < 0$. So, there is no need of a negative sign.}
\begin{equation}
    A_2(t) = \frac{ u''(t)}{u'(t)}, \text{   } R_2(t) = t\frac{ u''(t)}{u'(t)}
\end{equation}

Therefore, Proposition \ref{RP_local approxiamtion} provides an \textit{insightful decomposition} of the variability premium, showing that  the amount of additional time users are willing to pay to eliminate time variability depends on the first two moments (i.e., expected value and variance) of times and the absolute risk aversion (ARA). For a given service within a fixed period, different users usually experience the same mean and variance; consequently, heterogeneity in the variability premium arises primarily from differences in users’ risk-aversion intensities. Besides, though not explicitly captured in Proposition \ref{RP_local approxiamtion}, prudence also plays an important role in shaping the variability premium as it reflects users’ propensity to allocate additional time as a precaution against time variability. The intensity of prudent preference is typically characterized by the coefficients of absolute prudence (AP) or relative prudence (RP) coefficient \textcolor{blue}{\citep{gollier2001economics,eeckhoudt2009values}}. 
Similarly, let $A_3(t)$ and $R_3(t)$ denote AP and RP where the subscript $3$ represents the 3rd-order risk preference, then
\begin{equation}
    A_3(t) = -\frac{u^{\prime\prime\prime}(t)}{u''(t)}, \text{   } R_3(t) = -t\frac{u^{\prime\prime\prime}(t)}{u''(t)},
\end{equation}
where the negative sign in $A_3(t)$ and $R_3(t)$ is introduced to ensure that these measures remain positive.

The above coefficients of (relative) risk aversion and prudence play important roles in quantifying users' costs associated with time variability in mobility services,  to be shown later in this paper. Up to this point, variability premium and its approximation have enabled us to quantify how users behaviorally respond to time variability and to quantify the costs that mobility service users incur due to time and its variability.

\section{Bounding the inefficiency of time variability relative to no variability} \label{sec:COTandCOTV_EU}
When using mobility services, users typically incur a service cost. Accordingly, the users' utility and expected utility from using a mobility service with random time $t \in [T_{\min}, T_{\max}]$ can be expressed by:
\begin{align} 
    & U(c, T) = -\varphi c + u (T) \label{Utility} \\
    & \mathbb{E}\left[ U\left( c,t \right) \right] =\int{U\left( c,t \right) f\left( t \right) dt}=-\varphi c+\int{u\left( t \right) f\left( t \right) dt} \label{EU_expression}
\end{align}
where $c$ denotes the mobility service cost (e.g., ticket fare or booking fee) and $\varphi$ is marginal utility of wealth. 
\subsection{The costs of time and time variability}
\subsubsection{The cost of time}
With reference to Eq. \eqref{Utility}, if $T$ is marginally increased from $T$ to $T +dT$, the resulting utility would be  $U(c, T + dT)$. Consequently, we can define the monetary value of time (VOT) at time $T$ as follows:
\begin{equation} \label{VOTatT}
    VOT\left( T \right) =-\frac{1}{\varphi}\lim_{dT\rightarrow 0}\frac{U\left( c,\ T+dT \right) -U\left( c,\ T \right)}{dT}=-u'\left( T \right) \frac{1}{\varphi}
\end{equation}


Based on the monetary VOT at time $T$ defined in Eq. \eqref{VOTatT}, we can further obtain the (average) VOT, which quantifies the monetary value users place on time savings per time unit \textcolor{blue}{\citep{li2010willingness}}. Namely,
\begin{equation} \label{EU_U_VOT}
    \mathbb{E}\left[ U\left(c,\,t \right) \right] =\mathbb{E}\left[ U\left(c+VOT\Delta t,\,t-\Delta t \right) \right] 
\end{equation}

Then, it is straightforward to compute the cost of time (COT) as 
\begin{equation} \label{EU_U_COT}
    COT = VOT \cdot \mu 
\end{equation}

We note that in some studies the terms $VOT(T)$,  $VOT$ and $COT$ are used interchangeably. In this paper, however, we deliberately distinguish among these three concepts. Although they are closely related, they represent different quantities, and their relationships are clarified in Remark \ref{DiffVOTCOT}.
\begin{remark}[Differences and relationship among $VOT(T)$, $VOT$, and $COT$] \label{DiffVOTCOT}
    $VOT(T)$ is the instantaneous monetary value of time evaluated at time $T$; $VOT$ represents the average monetary value that users place on time savings per unit of time; and $COT$ denotes the total monetary amount users are willing to pay for the entire duration of the mobility service. Therefore, the $VOT$ is obtained by aggregating $VOT(T)$ over time to be shown in Eq. \eqref{VOTusingVOTT}; and $COT$ is computed as the product of $VOT$ and the expected time spent for a mobility service, i.e., $COT = VOT \cdot \mu$. $\square$
\end{remark}

\subsubsection{The cost of time variability}
Based on Eqs. \eqref{Utility} and \eqref{EU_expression}, we can use the following equation \eqref{EU_U_COTV} to compute the additional monetary cost required to compensate for the utility loss induced by time variability, which is referred to as the cost of time variability (COTV). 
\begin{equation} \label{EU_U_COTV}
     \mathbb{E}\left[ U\left( c,\,\,t \right) \right] \,\,=\,\,U\left( c+COTV,\,\,\mu \right) 
\end{equation}

By solving Eq. \eqref{EU_U_COTV}, we can obtain the expression of COTV, summarized in Lemma \ref{EstimatingCOTV}.  The result is obtained by directly combining Eqs. \eqref{Utility} and \eqref{EU_expression} into Eq. \eqref{EU_U_COTV}. 

\begin{lemma}\label{EstimatingCOTV}
    The cost of time variability can be calculated by:
    \begin{equation} \label{COTV}
        COTV=\int{\frac{u\left( \mu \right) -u\left( t \right)}{\varphi}f\left( t \right) dt} \notag
    \end{equation}
\end{lemma}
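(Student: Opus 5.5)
The lemma follows by evaluating both sides of the defining equation \eqref{EU_U_COTV} with the additively separable utility in Eq. \eqref{Utility}, and then solving the resulting linear equation for $COTV$.

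\emph{Left-hand side.} By Eq. \eqref{EU_expression}, the expected utility of the instance with time variability is
\begin{equation*}
\mathbb{E}[U(c,t)] = -\varphi c + \int u(t) f(t)\,dt .
\end{equation*}
The only fact needed here is that $\int f(t)\,dt = 1$ on $[T_{\min},T_{\max}]$, so the deterministic fare term $-\varphi c$ passes through the expectation unchanged.

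\emph{Right-hand side.} Substituting the arguments $(c+COTV,\mu)$ into Eq. \eqref{Utility} gives
\begin{equation*}
U(c+COTV,\mu) = -\varphi c - \varphi\, COTV + u(\mu).
\end{equation*}

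\emph{Solving.} Equating the two sides, the $-\varphi c$ terms cancel, leaving
\begin{equation*}
\varphi\, COTV = u(\mu) - \int u(t) f(t)\,dt .
\end{equation*}
Since $u(\mu)$ is a constant and $\int f(t)\,dt = 1$, we may write $u(\mu) = \int u(\mu) f(t)\,dt$. Combining the two integrals and dividing by $\varphi$ yields
\begin{equation*}
COTV = \int \frac{u(\mu)-u(t)}{\varphi}\, f(t)\,dt ,
\end{equation*}
which is the stated formula.

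\emph{Conditions.} The argument needs two things. First, $\varphi>0$, so that dividing by $\varphi$ is legitimate; this holds because $\varphi$ is the marginal utility of wealth. Second, integrability of $u(t)f(t)$ over the bounded support $[T_{\min},T_{\max}]$, which is automatic if $u$ is continuous there, as it is for the differentiable $u$ assumed throughout. There is no substantive obstacle; the only point requiring care is writing $u(\mu)$ as an integral against $f$ so that the two terms merge into one integrand.

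As a consistency check, I would also note a remark the paper will presumably use later. By Definition \ref{variability premium}, $\int u(t)f(t)\,dt = u(\mu+\pi)$, so
\begin{equation*}
COTV = \frac{u(\mu)-u(\mu+\pi)}{\varphi} .
\end{equation*}
Since $u$ is non-increasing, this is nonnegative whenever $\pi \ge 0$, and $\pi \ge 0$ holds for risk-averse users.
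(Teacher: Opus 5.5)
Your proposal is correct and is essentially the paper's own argument. The paper likewise obtains the formula by substituting Eqs.~\eqref{Utility} and \eqref{EU_expression} into the defining equation \eqref{EU_U_COTV}, cancelling the $-\varphi c$ terms, and solving for $COTV$; your extra bookkeeping (writing $u(\mu)=\int u(\mu)f(t)\,dt$, and noting that the division only needs $\varphi\neq 0$) just makes explicit what the paper leaves implicit.
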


So far, we have quantified the cost of using a mobility service without time variability (i.e., the $COT$) and the additional cost incurred by time variability (i.e., the $COTV$). These two measures provide a solid foundation for examining the impact of time variability on users, thereby enabling us address our central question:\textit{ how bad is time variability for users of mobility services}?

\subsection{The ratios}
\subsubsection{The ratio of the COTV to the COT}
To quantify the extent to which time variability amplifies the cost experienced by users, we define  $\rho \left(i, t, \mu \right)$ as the ratio of the COTV to the COT:
\begin{equation}\label{ratioExpre}
    \rho \left( i,t,\mu \right) =\frac{COTV}{COT}
\end{equation}

Based on Eqs. \eqref{VOTatT}, \eqref{EU_U_COT}, and \eqref{EU_U_COTV}, together with Proposition 1, we can derive the mathematical expression for the ratio $\rho \left(i, t, \mu \right)$. The result is summarized in  the following Proposition \ref{rho_local approxiamtion}.

%

\begin{proposition}\label{rho_local approxiamtion}
    For a mobility service with random time $t$ and a user with utility function $u(t)$, the ratio of the COTV to the COT can be approximated by
    \begin{equation} \label{rho_local approxiamtion_EQ}
        \rho \left( i,t,\mu \right) \simeq \frac{\pi}{\mu}\frac{1}{1-1/2\sigma ^2\frac{u^{\prime\prime\prime}\left( \mu \right)}{u'\left( \mu \right)}}
    \end{equation}
\end{proposition}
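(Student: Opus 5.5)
The plan is to approximate the numerator and the denominator of $\rho(i,t,\mu)=COTV/COT$ separately to the same order in $\sigma$, and then take their quotient. Both will be expressed through derivatives of $u$ evaluated at $\mu$.

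\emph{Numerator.} By Lemma \ref{EstimatingCOTV} and Definition \ref{variability premium}, I would write
\[
COTV=\frac{u(\mu)-\mathbb{E}[u(t)]}{\varphi}=\frac{u(\mu)-u(\mu+\pi)}{\varphi}.
\]
Since $\pi=O(\sigma^2)$ by Proposition \ref{RP_local approxiamtion}, a first-order Taylor expansion of $u(\mu+\pi)$ around $\mu$ gives
\[
COTV\simeq -\frac{u'(\mu)}{\varphi}\,\pi .
\]
This is positive for a risk-averse user, because $u'<0$ and $\pi\ge 0$ in the paper's sign convention.

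\emph{Denominator.} I would make the average $VOT$ explicit from Eq. \eqref{EU_U_VOT}. Substituting Eq. \eqref{Utility} gives
\[
-\varphi c+\mathbb{E}[u(t)]=-\varphi c-\varphi\, VOT\,\Delta t+\mathbb{E}[u(t-\Delta t)].
\]
Dividing by $\Delta t$ and letting $\Delta t\to 0$ yields $VOT=-\mathbb{E}[u'(t)]/\varphi$, which is the $f$-weighted aggregate of $VOT(T)$ from Eq. \eqref{VOTatT}. I would then apply the same device as in the proof of Proposition \ref{RP_local approxiamtion}, now to $u'$. Writing $t=\mu+\sigma x$ with $\mathbb{E}[x]=0$ and $\mathbb{E}[x^2]=1$, and expanding $u'(\mu+\sigma x)$ to second order in $\sigma$, gives
\[
\mathbb{E}[u'(t)]\simeq u'(\mu)+\tfrac12\sigma^2u'''(\mu).
\]
Hence, by Eq. \eqref{EU_U_COT},
\[
COT=VOT\cdot\mu\simeq -\frac{\mu}{\varphi}\,u'(\mu)\Bigl(1+\tfrac12\sigma^2\tfrac{u'''(\mu)}{u'(\mu)}\Bigr).
\]

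\emph{Quotient.} In the ratio the factor $-u'(\mu)/\varphi$ cancels, leaving
\[
\rho\simeq\frac{\pi}{\mu}\cdot\frac{1}{1\pm\tfrac12\sigma^2 u'''(\mu)/u'(\mu)}.
\]
The remaining task is to fix the sign in front of $\tfrac12\sigma^2u'''/u'$ so that it matches the statement.

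The main obstacle I expect is this sign bookkeeping. My naive derivation produces $1+\tfrac12\sigma^2u'''(\mu)/u'(\mu)$, whereas the statement has a minus sign. I would resolve it by following exactly the aggregation of $VOT(T)$ that the paper intends in Eq. \eqref{VOTusingVOTT}, including the sign convention under which $u'<0$ and prudence means $u'''\ge0$. If a plain Taylor expansion of $\mathbb{E}[u'(t)]$ does not reproduce the stated minus sign, that discrepancy should be checked against the paper's definition rather than asserted.

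A secondary point is consistency of orders: the numerator is accurate to $O(\sigma^2)$ through $\pi$, and I would keep the $\sigma^2$ correction in the denominator without expanding the quotient, since that is the form the statement displays.
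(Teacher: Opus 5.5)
Your decomposition is the paper's own. You get $COTV\simeq \pi\cdot VOT(\mu)$ from $\mathbb{E}[u(t)]=u(\mu+\pi)$ plus a first-order expansion. The paper instead expands $\mathbb{E}[(u(\mu)-u(\mu+\sigma x))/\varphi]$ to second order in $\sigma$. Both give $-u''(\mu)\sigma^2/(2\varphi)$. For the denominator, both of you take $VOT=-\mathbb{E}[u'(t)]/\varphi$ and expand it to second order in $\sigma$. Your $\mathbb{E}[u'(t)]\simeq u'(\mu)+\tfrac12\sigma^2u'''(\mu)$ is exactly the paper's Eq.~\eqref{VOTusingVOTT}, $VOT\simeq VOT(\mu)-\tfrac12\sigma^2u'''(\mu)/\varphi$.

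The sign discrepancy you flagged is real, but it is not a bookkeeping error on your side. Your plan to remove it by appealing to the convention $u'<0$, $u'''\ge 0$ cannot succeed. Dividing by $VOT(\mu)=-u'(\mu)/\varphi$ gives the purely algebraic identity
\[
\frac{VOT}{VOT(\mu)}\simeq 1+\frac{1}{2}\sigma^2\frac{u'''(\mu)}{u'(\mu)},
\]
in which $\varphi$ cancels and the sign of $u'$ is already built in. The paper's proof reaches this same intermediate and then writes $1-\tfrac12\sigma^2u'''(\mu)/u'(\mu)$ in its final equality. That is an algebra slip, so the displayed statement does not follow from the paper's own definitions. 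It would hold only if $u'(\mu)$ in the correction term were read as $|u'(\mu)|=-u'(\mu)$.

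A convention-free check confirms your sign. Prudence means $u'$ is convex, so Jensen gives $\mathbb{E}[u'(t)]\ge u'(\mu)$, i.e.\ $VOT\le VOT(\mu)$. The bracket must therefore be at most $1$ for a prudent user. Since $u'''\ge 0>u'$, the expression $1+\tfrac12\sigma^2u'''/u'$ satisfies this, while $1-\tfrac12\sigma^2u'''/u'$ does not. You should conclude $\rho\simeq\frac{\pi}{\mu}\bigl(1+\tfrac12\sigma^2u'''(\mu)/u'(\mu)\bigr)^{-1}$ outright rather than leave the sign open.

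The flip propagates to the paper's later formulas. In terms of $R_2,R_3$, the correct result is $\rho\simeq R_2CV^2/(2-R_2R_3CV^2)$ and $\eta\simeq 1/(1-\tfrac12R_2R_3CV^2)$. Prudence therefore raises, rather than lowers, the ratio relative to $\pi/\mu$.

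On your secondary point about orders: keeping the $\sigma^2$ term in the denominator is a presentational choice, not an order-consistent refinement. It moves $\rho$ only at $O(\sigma^4)$, whereas the truncated numerator already drops the $O(\sigma^3)$ skewness term $-\sigma^3\mathbb{E}[x^3]u'''(\mu)/(6\varphi)$ of $COTV$. The paper's proof has the same feature.
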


\begin{proof}
    {Proof.} We first derive the formulation of $COT$. Combining three equations of \eqref{Utility}, \eqref{VOTatT}, and \eqref{EU_U_VOT} gives
    \begin{equation}
        -\varphi VOT\Delta t=\int{u\left( t \right) f\left( t \right) dt}-\int{u\left( t-\Delta t \right) f\left( t \right) dt}
    \end{equation}    

    Then, by rearranging terms and based on $t = \mu + \sigma x$, we can have 
    \begin{equation}
        VOT=-\int{\frac{1}{\varphi}\frac{u\left( t \right) -u\left( t-\Delta t \right)}{\Delta t}f\left( t \right) dt}=\mathbb{E}\left[ VOT\left( t \right) \right] =\mathbb{E}\left[ -\frac{u'\left( \mu +\sigma x \right)}{\varphi} \right] 
    \end{equation}
    
    This implies that $VOT$ can be viewed as a function of $\sigma$; denote $VOT = g_2(\sigma)$. Then, the $l$-th derivative of the function $g_2(\sigma)$ is expressed by $g_2^{\left( l \right)}\left( \sigma \right) = -\mathbb{E}[x^lu^{\left( l+1 \right)}\left( \mu +\sigma x \right)/\varphi] $. Using second-order Taylor series expansion to $g_2(\sigma)$ around $\sigma = 0$ yields:
    \begin{equation} \label{VOTusingVOTT}
         VOT  \simeq g_2\left( 0 \right)  +\sum\limits_{l=1}^2{\frac{g_2^{\left( l \right)}\left( 0 \right)}{l!}\sigma ^l} = VOT(\mu) - 1/2 \sigma^2 \frac{u^{\prime\prime\prime}(\mu)}{\varphi}
    \end{equation}
    where the $VOT(\mu)$ is the defined VOT at time $\mu$. Finally, it is straightforward to compute the COT as 
    \begin{equation} \label{COT_formulation}
        COT\simeq VOT\left( \mu \right) \mu +\left( -\frac{1}{2}\frac{u^{'''}\left( \mu \right)}{\varphi}\sigma ^2 \right) \mu 
    \end{equation}
    
    Then, we derive the formulation of $COTV$. Since  $ t=\mu +\sigma x$, we have 
    $COTV=\mathbb{E}\left[ COTV\left( t \right) \right] =\mathbb{E}\left[ COTV\left( \mu +\sigma x \right) \right]$ where $COTV\left( t \right) =\frac{u\left( \mu \right) -u\left( t \right)}{\varphi}$. This representation allows us to view the $COTV$ as a function of $\sigma$, i.e., $ COTV=g_3\left( \sigma \right)$. Equivalently, we have $g_3\left( \sigma \right) =\mathbb{E}\left[ COTV\left( \mu +\sigma x \right) \right]$. Then, the $l$-th derivative of $g_3\left( \sigma \right)$ can be expressed by 
    \begin{equation}
          g_{3}^{\left( l \right)}\left( \sigma \right) =-\mathbb{E}\left[ x^lu^{\left( l \right)}\left( \mu +\sigma x \right) /\varphi \right], \text{ when } l \ge 1
    \end{equation}
    Using the second-order Taylor expansion, we can obtain an approximation of $COTV$ as follows.
    \begin{equation}\label{COTV_formulation}
        COTV=g_3\left( \sigma \right) \simeq g_3\left( 0 \right) +g_3'\left( 0 \right) \sigma +\frac{g_3''\left( 0 \right)}{2}\sigma ^2=\frac{1}{2}\frac{-u''\left( \mu \right)}{\varphi}\sigma ^2=\pi \times VOT\left( \mu \right) 
    \end{equation}
    According to Eq. \eqref{VOTatT}, we know $VOT\left( \mu \right) =-u'\left( \mu \right) \frac{1}{\varphi}$. Finally, combining Eqs. \eqref{COT_formulation} and \eqref{COTV_formulation} can obtain:
    \begin{equation}
        \rho \left( i,t,\mu \right) =\frac{\pi}{\mu}\frac{VOT\left( \mu \right)}{VOT\left( \mu \right) -1/2\sigma ^2\frac{u^{\prime\prime\prime}\left( \mu \right)}{\varphi}} = \frac{\pi}{\mu}\frac{1}{1-1/2\sigma ^2\frac{u^{\prime\prime\prime}\left( \mu \right)}{u'\left( \mu \right)}}\notag
    \end{equation}
    This completes the proof. $\square$
\end{proof}

Two observations immediately follow from Proposition \ref{rho_local approxiamtion} and its proof. 

First, Eq. \eqref{COTV_formulation} provides a straightforward way to monetize the time unit-based variability premium into monetary cost. Namely, we offer an alternative estimation method of the $COTV$ based on the proposed variability premium and the $VOT$ at time $\mu$ as given by Eq. \eqref{VOTatT}. This approach is particularly useful for incorporating the cost of time variability into cost–benefit analyses in transport project appraisal. For more details, please refer to \textcolor{blue}{\cite{zang2024relia}}. 

Second, the result of $\rho \left( i,t,\mu \right)$ depends critically on the derivatives of the utility function of time. As shown in Definition \ref{RARPRiskAttitudes}, users’ risk attitudes are characterized by these derivatives. In other words, risk attitudes play a central role in determining the magnitude of the ratio. Accordingly, when deriving the theoretical results in Section \ref{sec:theoreticalResult}, we first examine the most widely used utility function in pioneering empirical and theoretical transportation studies, i.e., the quadratic utility function, and then extend the analysis to a general setting in which the utility function of time is assumed to be differentiable.


\subsubsection{The ratio of the VOTV to the VOT}
We note that the ratio $\rho \left( i,t,\mu \right)$ captures the overall cost imposed on users by service time variability. However, in many practical settings, it is also important to consider the rate of return, in order to balance improvements in efficiency against efforts to eliminate variability. This consideration arises because demand for mobility services is derived demand; consequently, users typically face a limited time budget for mobility services. Motivated by this observation, we further examine the rate of return per unit of additional cost that users incur when hedging against time variability. This perspective allows us to compare the relative importance users attach to a marginal reduction in time variability versus a marginal reduction in mean time. We can notice that if we use variability premium as the valuation measure, Eq. \eqref{COTV_formulation} provides us a simple way to derive the value of time variability (VOTV), which quantifies the monetary value users place on a unit of reduction in time variability. That is, we have $VOTV\ =\ COTV/\pi =\ VOT(\mu) =\ -u'\left( \mu \right) \frac{1}{\varphi}$. Then, the ratio of the VOTV to VOT, which focuses on the rate of return, is given by\endnote{$\eta \left( i,t,\mu \right)$ is consistent with the so-called reliability ratio in existing valuation studies of travel time reliability \textcolor{blue}{\cite[See e.g.,][]{jackson1982empirical,fosgerau2010value,taylor2017fosgerau,zang2022ttrb}}.}:
\begin{equation} \label{RR}
     \eta \left( i,t,\mu \right) =\,\,\frac{VOTV}{VOT} = \frac{1}{1-1/2\sigma ^2\frac{u^{\prime\prime\prime}\left( \mu \right)}{u'\left( \mu \right)}}
\end{equation}


\section{Theoretical results} \label{sec:theoreticalResult}
This section first derives the ratios under quadratic utility functions and then extends the analysis to more general settings with arbitrary differentiable utility functions..

\subsection{Results under quadratic utilities}\label{Sec:SpecialQuadratic} 
In the literature, the utility function of time in mobility service systems is typically assumed to be quadratic, as it provides analytical tractability while capturing second-order risk preferences commonly assumed in transport economics \textcolor{blue}{\citep[See e.g.][]{jackson1982empirical,senna1994influence,yin2004new, beaud2016impact, zang2022ttrb}}. Under this assumption, users’ decision making depends on no more than second-order information of time variability, and Remark \ref{remarkQuadraRiskMoment} in next Section \ref{sec:Benchmark value of RRA} will provide theoretical analysis for this. This is reasonable in practice. Indeed, even the second moment of time distribution, i.e., variance, are already difficult for users to understand\endnote{Although variance is a commonly used measure in the mentioned VOR studies, it is challenging to communicate the extent of TTV clearly and simply to the public and thus there have been considerable efforts to design stated preference questionnaires to communicate the concept of TTV as straightforwardly as possible. See \textcolor{blue}{\cite{li2010willingness}} and \textcolor{blue}{\cite{carrion2012value}} for details.} \textcolor{blue}{\citep{FHWA2006,nevers2014guide}}, let alone the higher moment information (e.g., skewness) of time variability.

If users' utility function $u(t)$ is quadratic, we have $u^{\prime\prime\prime}\left( \mu \right) = 0$ and the ratio would reduces to
\begin{equation}
    \rho \left( i,t,\mu \right) = \frac{1}{2} R_2(\mu) CV^2 \label{generalQuadraRatio_EU}
\end{equation}
where $CV^2 = {\sigma^2}/{\mu^2}$.

Here, the ratio $\rho \left( i,t,\mu \right)$ depends linearly on the degree of users' second-order risk preference, i.e., $R_2(\mu)$, and the degree of time variability, i.e., $CV^2$. 
To derive an upper bound for the ratio, we first consider a commonly assumed service process in both theoretical and practical studies of mobility service systems, i.e., the Poisson process; and then extend the analysis to other arbitrary service processes. 


\subsubsection{An upper bound of the ratio $\rho$ for mobility service under Poisson processes} \label{PossionService}
Generally, mobility service operates as a stochastic process which usually involves queuing process of users due to its limited capacity. Consequently, service process in various mobility  systems is typically modeled as the Poisson process, under which the service duration or service time of users follows the exponential distribution. For example, exponential distribution is widely used to theoretically and empirically model travel time of private cars, taxi, and bus or transit \textcolor{blue}{\citep[e.g.,][]{jackson1982empirical,noland1995travel,yin2002optimal,benezech2013value,qian2021scaling}}, departure or arrival delay of bus or train or railway \textcolor{blue}{\citep[e.g.,][]{ cai2014optimal,harrod2019closed}}, and waiting or queuing time in offline mobility services such as charging or tolling services and online mobility services such as ridesourcing or delivery \textcolor{blue}{\citep[e.g.,][]{guo2007analysis,allon2011impact,sa2019dynamic,lin2023wait}}.  

In particular, Proposition \ref{quadraticRatioExponetialTime} states that, if the mobility service follows a Poisson process, users with quadratic function will incur at most $1/2$ times more additional cost due to the existence of time variability. In other words, the total cost of mobility service under time variability is at most $3/2$ times that under the setting with no time variability. Furthermore, such upper bound is tight as the bound is attained when the utility function is pure quadratic (i.e., $u(t) = at^2 +c$). Namely, $\rho\left( i,t,\mu \right) = \frac{1}{2}$. To the best of our knowledge, this simple yet insightful upper bound has not been previously established in the literature.
\begin{proposition}\label{quadraticRatioExponetialTime}
    An expected-utility (EU) user with quadratic utility function will pay at most 1/2 of the cost of time as additional cost due to  time variability in mobility systems whose service  follows the Poisson process. Formally,  $\rho\left( i,t,\mu \right) \le \frac{1}{2}$, with equality attained for a pure quadratic utility function. 
\end{proposition}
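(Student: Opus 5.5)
Starting from Eq. \eqref{generalQuadraRatio_EU}, I would split the argument into two independent factors: the coefficient of variation fixed by the Poisson assumption, and a bound on relative risk aversion that holds for every admissible quadratic utility.

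\textbf{Step 1 (the CV factor).} Under a Poisson service process the service time $t$ is exponential with rate $\lambda$. Hence $\mu = 1/\lambda$ and $\sigma^2 = 1/\lambda^2$, so $CV^2 = \sigma^2/\mu^2 = 1$. Because $u'''\equiv 0$ for a quadratic utility, Proposition \ref{rho_local approxiamtion} holds exactly: both Taylor expansions used in its proof terminate at second order. The ratio therefore reduces to $\rho(i,t,\mu) = \tfrac12 R_2(\mu)$, and it suffices to show $R_2(\mu)\le 1$.

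\textbf{Step 2 (bounding $R_2$).} Write $u(t) = at^2 + bt + c$. Risk aversion (Definition \ref{RARPRiskAttitudes}) gives $u'' = 2a \le 0$. Monotonicity $u'(t) = 2at + b \le 0$ must hold for all $t\in\mathbb{R}_+$, and at $t=0$ this forces $b \le 0$. Then
\begin{equation*}
R_2(\mu) = \frac{\mu\, u''(\mu)}{u'(\mu)} = \frac{2a\mu}{2a\mu + b}.
\end{equation*}
Numerator and denominator are both nonpositive, and $|2a\mu + b| = 2|a|\mu + |b| \ge 2|a|\mu$. Hence $0 \le R_2(\mu) \le 1$, with equality exactly when $b = 0$, i.e., for the pure quadratic $u(t) = at^2 + c$ with $a<0$. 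Combining with Step 1 gives $\rho \le \tfrac12$, with equality in the pure quadratic case.

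\textbf{Main obstacle.} The delicate point is justifying $b \le 0$. The sign conditions $u' \le 0$ and $u'' \le 0$ are imposed on the whole domain $\mathbb{R}_+$, not just at $\mu$, and it is precisely evaluation at $t = 0$ that yields $b \le 0$. If one only required $u'(\mu) < 0$, then $b > 0$ would be possible and $R_2(\mu)$ could exceed $1$, so the bound would fail.

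A secondary point is the degenerate case $a = 0$ (linear utility). There $R_2 = 0$ and the bound holds trivially; in the pure quadratic case $b=0$ the quotient is still well defined, since $\mu > 0$ and $a < 0$.

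Finally, I would note why the Taylor approximation in Proposition \ref{rho_local approxiamtion} does not weaken the result. Under quadratic utility the COTV equals $-u''\sigma^2/(2\varphi)$ exactly, and the VOT equals $VOT(\mu)$ exactly. The bound is therefore exact rather than approximate.
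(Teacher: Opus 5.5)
Your proof is correct and follows the paper's argument. You use $u'''=0$ to reduce to $\rho=\tfrac12 R_2(\mu)CV^2$, bound $R_2(\mu)=\frac{2a\mu}{2a\mu+b}\le 1$ using $a<0$ and $b\le 0$ (monotonicity on $\mathbb{R}_+$), and invoke $CV=1$ for the exponential distribution, with equality iff $b=0$. The differences are cosmetic: you obtain $b\le 0$ from $u'(0)\le 0$ rather than the paper's vertex condition $-b/(2a)\le 0$, and you add the correct observation that COTV and VOT are exact under quadratic utility, so the bound is not merely approximate.
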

\begin{proof}
    {Proof.} If users have a quadratic utility function of time, then it can be written as $u(t) = at^2 + bt +c$ in which $u^{\prime\prime\prime}(t) =0$. According to Lemma \ref{RP_local approxiamtion} and noting that $A_2(\mu) = \frac{2a}{2a \mu + b}$, the ratio can be simplified as
    \begin{equation}
        \rho \left( i,t,\mu \right) =\frac{COTV}{COT}=\frac{\pi}{\mu}\frac{VOT\left( \mu \right)}{VOT\left( \mu \right) -1/2\sigma ^2\frac{u^{\prime\prime\prime}\left( \mu \right)}{\varphi}}=\frac{\pi}{\mu}\frac{VOT\left( \mu \right)}{VOT\left( \mu \right)}=\frac{\pi}{\mu}=\frac{\sigma ^2}{2\mu} \cdot \frac{2a}{2a\mu +b} \label{ratio_quadratic_1}
    \end{equation}
    As introduced in Section \ref{sec2.1}, for $t>0$, $u(t)$ should be a non-increasing function. This implies that $a <0$ and $-\frac{b}{2a} \leq 0$ or equivalently, $a <0$ and $b \leq 0$. Consequently, the maximum value of ratio $\rho \left( i,t,\mu \right)$ based on Eq. \eqref{ratio_quadratic_1} can be derived as follows
    \begin{equation}
        \rho \left( i,t,\mu \right) =\frac{\sigma ^2}{2\mu}\cdot \frac{2a}{2a\mu +b}\leq \frac{\sigma ^2}{2\mu}\cdot \frac{2a}{2a\mu}=\frac{1}{2}\frac{\sigma ^2}{\mu ^2}=\frac{1}{2}CV^2 \label{ratio_quadratic_2}
    \end{equation}    
    If the service of mobility systems follows a Poisson process, then the time of users involving in the service process will follow an exponential distribution. One key property of exponential distribution is that it has the same mean and standard deviation, implying $CV = 1$. Therefore, we have $\rho\left( i,t,\mu \right) \le  \frac{1}{2}$. For a pure quadratic utility function, i.e., $u(t) = at^2 +c$, the equality $\rho\left( i,t,\mu \right) = \frac{1}{2}$ holds. This completes the proof. $\square$
\end{proof}
\subsubsection{An upper bound of the ratio $\rho$ for general mobility service} \label{GeneralService}
If no specific assumption is imposed on the service process of the mobility systems, we can derive a general result for the ratio under an arbitrary random service process, which is summarized in Corollary \ref{quadraticRatioProp}. Its proof follows directly from that of Proposition \ref{quadraticRatioExponetialTime} and is therefore omitted.

\vspace{0.5em}
\begin{corollary}\label{quadraticRatioProp}
    For an EU user with quadratic utility function, $\rho\left( i,t,\mu \right) \le \frac{1}{2} CV^2$; particularly, for a pure quadratic utility function, equality holds, i.e., $\rho\left( i,t,\mu \right) = \frac{1}{2} CV^2$.  
\end{corollary}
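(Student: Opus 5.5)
The plan is to reuse the chain of steps from the proof of Proposition \ref{quadraticRatioExponetialTime}, but stop before specialising to the exponential distribution.

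\textbf{Step 1: reduce $\rho$ to $\pi/\mu$.} Write the quadratic utility as $u(t)=at^2+bt+c$. Then $u'''\equiv 0$. Proposition \ref{rho_local approxiamtion} therefore gives $\rho(i,t,\mu)=\pi/\mu$. By Proposition \ref{RP_local approxiamtion},
\[
\rho(i,t,\mu)=\frac{\sigma^2}{2\mu}\,A_2(\mu)=\frac{\sigma^2}{2\mu}\cdot\frac{2a}{2a\mu+b},
\]
which is the same as Eq. \eqref{generalQuadraRatio_EU}, i.e. $\tfrac12 R_2(\mu)CV^2$.

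For a quadratic utility these "approximations" are in fact exact. The Taylor expansions of $g_1$, $g_2$ and $g_3$ in $\sigma$ terminate, since all derivatives of $u$ of order three and higher vanish. I would note this explicitly: it means the bound is a genuine inequality rather than an approximate one. The one place to be careful is that $\pi$ solves $u(\mu+\pi)=\mathbb{E}[u(t)]$, and for a quadratic $u$ this $\pi$ need not coincide exactly with $\sigma^2A_2(\mu)/2$. If needed, I would simply adopt the paper's convention that $COTV=\pi\cdot VOT(\mu)$ via Eq. \eqref{COTV_formulation}. That equation is exact here: $COTV=-\tfrac{u''(\mu)}{2\varphi}\sigma^2=-a\sigma^2/\varphi$ holds exactly for a quadratic $u$, and $VOT=-(2a\mu+b)/\varphi$ is also exact. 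So $\rho=a\sigma^2/\big((2a\mu+b)\mu\big)$ holds exactly.

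\textbf{Step 2: show $R_2(\mu)\le 1$ (the only real content).} This uses the sign restrictions from Section \ref{sec2.1}. Risk aversion gives $u''=2a<0$, and monotonicity $u'(t)=2at+b\le 0$ on $\mathbb{R}_+$, evaluated at $t=0$, gives $b\le 0$. Since $a<0$ and $b\le 0$, both $2a\mu$ and $b$ are nonpositive, so
\[
|2a\mu+b|\ge|2a\mu|>0,
\]
and the ratio $\frac{2a\mu}{2a\mu+b}$ lies in $(0,1]$. Multiplying by $\tfrac12 CV^2$ yields $\rho\le\tfrac12 CV^2$.

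\textbf{Step 3: equality case.} For a pure quadratic, $b=0$, so $R_2(\mu)=1$ and $\rho=\tfrac12 CV^2$ exactly.

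\textbf{Main obstacle.} The main point requiring care is justifying $b\le 0$. It should come from $u'(0)\le 0$, i.e. requiring $u$ to be non-increasing on all of $\mathbb{R}_+$, and not merely on the support $[T_{\min},T_{\max}]$. If $u$ were only required to be non-increasing on the support, then $b>0$ would be allowed, giving $R_2(\mu)>1$, and the bound would fail. I would therefore state this domain assumption explicitly.
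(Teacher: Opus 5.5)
Your proposal is correct and follows the paper's route. You reduce $\rho$ to $\tfrac12 R_2(\mu)CV^2$ via $u'''\equiv 0$, then use $a<0$ and $b\le 0$ (from $u'\le 0$ on all of $\mathbb{R}_+$) to get $R_2(\mu)=\frac{2a\mu}{2a\mu+b}\le 1$, with equality at $b=0$. Your exactness remark holds for $g_2$ and $g_3$, so $\rho=\frac{a\sigma^2}{(2a\mu+b)\mu}$ exactly, but not for $g_1$: its expansion does not terminate, since the exact $\pi$ solves $a\pi^2+u'(\mu)\pi-a\sigma^2=0$. That is why working with $COTV$ and $VOT$ directly, as you end up doing, is the right way to make the inequality exact.
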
 
\vspace{0.5em}

Corollary \ref{quadraticRatioProp} indicates that, for a risk-averse user, the COTV is at most $1/2 CV^2$ times the COT. In other words, the upper bound of $\rho\left( i,t,\mu \right)$ is exactly $1/2 CV^2$. Furthermore, such upper bound is tight as the equation is attained when the utility function is purely quadratic (i.e., $u(t) = at^2 +c$). Section \ref{ImplicationOfQuadratic} will discuss the underlying implications of Corollary \ref{quadraticRatioProp}.

\subsubsection{Users value time variability reductions as much as time reductions} 
With respect to the benefits from time variability reduction and mean service time reduction, Corollary \ref{EqualRRas1} shows that users obtain the same benefit per unit of time saved as the amount that they pay per unit of time variability reduced. Its proof is straightforward: when the utility function $u(t)$ is quadratic, the ratio $\eta \left( i,t,\mu \right)$ defined in Eq. \eqref{RR} would be 1, since $u^{\prime\prime\prime}\left( \mu \right) = 0$.

\vspace{0.5em}
\begin{corollary}\label{EqualRRas1}
    For risk-averse EU users with quadratic utility function, the value of time variability equals the value of time. Namely, $\eta \left( i,t,\mu \right) = 1$.
\end{corollary}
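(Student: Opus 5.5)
The plan is to read the claim directly off Eq. \eqref{RR}, using the fact that a quadratic utility has vanishing third derivative. I will treat $\eta(i,t,\mu)$ at the same level of approximation at which Eq. \eqref{RR} was obtained: the second-order Taylor expansions of $VOT=g_2(\sigma)$ in Eq. \eqref{VOTusingVOTT} and of $COTV=g_3(\sigma)$ in Eq. \eqref{COTV_formulation}.

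\textbf{Step 1: the VOTV.} From Eq. \eqref{COTV_formulation} we have $COTV\simeq \pi\cdot VOT(\mu)$. Hence the monetary value of a unit reduction in the variability premium is
\[
VOTV = COTV/\pi = VOT(\mu) = -u'(\mu)/\varphi .
\]
For a risk-averse user with $u'(\mu)<0$ this is strictly positive, so the quotient that follows is well defined.

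\textbf{Step 2: the VOT.} Write $u(t)=at^2+bt+c$, as in the proof of Proposition \ref{quadraticRatioExponetialTime}, with $a<0$ and $b\le 0$. Then $u'''\equiv 0$, so the correction term $-\tfrac12\sigma^2 u'''(\mu)/\varphi$ in Eq. \eqref{VOTusingVOTT} vanishes and $VOT=VOT(\mu)$. In fact this holds exactly, not only approximately. Since $u'$ is affine, $\mathbb{E}[u'(\mu+\sigma x)] = u'(\mu)$ because $\mathbb{E}[x]=0$. The expansion of $g_2$ therefore terminates, and $VOT=\mathbb{E}[-u'(t)/\varphi]=-u'(\mu)/\varphi$ without remainder.

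\textbf{Step 3: conclusion.} Substituting into Eq. \eqref{RR} gives
\[
\eta(i,t,\mu)=\frac{1}{1-\tfrac12\sigma^2\cdot 0/u'(\mu)}=1 .
\]

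The only point needing care is that the denominator $u'(\mu)$ must be nonzero. I will justify this from the sign restrictions already imposed: $a<0$, $b\le 0$ and $\mu>0$ give $u'(\mu)=2a\mu+b<0$. There is no real obstacle beyond this observation.
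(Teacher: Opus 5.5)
Your proposal is correct and follows the paper's own argument: substituting $u'''\equiv 0$ for a quadratic utility into Eq.~\eqref{RR} gives $\eta(i,t,\mu)=1$. Your additional checks are sound refinements that the paper leaves implicit: $VOT=VOT(\mu)$ holds exactly because $u'$ is affine and $\mathbb{E}[x]=0$, and $u'(\mu)=2a\mu+b<0$ keeps the quotient well defined.
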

\vspace{0.5em}

The result given by Corollary \ref{EqualRRas1} actually provides a theoretical support for a commonly stated claim in the literature \textcolor{blue}{\cite[see e.g.,][]{chen2023conservative, zang2024relia}} highlighting the importance of accounting for the COTV: \textit{users value time variability reductions as much as, if not more than, time savings}. In addition, it also lends theoretical justification to the prevalent practice in reliability-based network design problem, where the same assumed VOT coefficient is directly used to transfer the additional time for hedging against time variability to monetary costs (e.g., \textcolor{blue}{\cite{yan2013robust}}). For more details, please see a review paper of \textcolor{blue}{\cite{chen2011transport}}. Remark \ref{remarkQuadraRiskMoment} in \ref{sec:Benchmark value of RRA} elaborates on the underlying theoretical implications of adopting a quadratic utility specification for users in mobility service systems.


\subsubsection{Implications of special results of $1/2$ and $1/2CV^2$} \label{ImplicationOfQuadratic}
According to \textcolor{blue}{\cite{wardman2016values}}, the congestion multiplier in travel time valuation is the ratio of the VOT with congestion to the VOT without congestion. Many empirical studies report 1.5 as the central value of the congestion multiplier \textcolor{blue}{\citep{wardman2016values}} and this parameter plays a pivotal role in transport project. Thus, recalling the definition of $\rho(i,t,\mu)$, we can find \( \rho(i,t,\mu) + 1 \) is quite close to the congestion multiplier. The consistency between our theoretical result of $3/2$ and empirical value of $1.5$ implies that, for most mobility services or project appraisals involving a group of users, it is reasonable to assume that (1) the service process of transport systems can be approximated by a Poisson process and (2) the intensity of users' risk aversion satisfies \( R_2(t) \le 1 \), as formalized in Lemma \ref{benchmarkRRA} in next section, which implies users make decisions with consideration of no more than first two moments, i.e., mean and variance. Equivalently, these users treat time variability as a small risk in everyday travel decisions. They therefore favor routes with lower variability, but are willing to tolerate moderate variability when its elimination requires disproportionate cost.

However, for a specific user, higher-order risk preference may be important for various reasons, such as greater variability inherent in the mobility service under special events (e.g., disasters) or the importance of the purpose of travel (e.g., business travel). In such scenarios, users may exhibit \( R_2(t) > 1 \) and large $CV^2$ for users, which corresponds to the general result given in Eq. \eqref{generalQuadraRatio_EU}. This explains why premium service prices in practice can be several times higher than those of regular service, such as the price differentials between regular trains and high-speed trains, or between bus rapid transit service and regular buse service. 

\subsection{General results for users with differentiable utility functions}
In this section, we consider a general setting where users' utility functions are assumed only to be differentiable. That is, we relax the assumptions on both the functional form of the utility and service process. 

Proposition \ref{generalRatio} shows that the ratio $\rho \left( i,t,\mu \right)$ generally depends on coefficient of variation (CV), RRA coefficient, and RP coefficient. Note that RRA coefficient, and RP coefficient are defined by derivatives of utility functions of time as introduced in Section \ref{sec2.1}.



\begin{proposition}\label{generalRatio}
    Suppose users are risk-averse under the EU framework for the service instance $(t, \mu, \sigma)$, then
    \begin{equation}
        \rho \left( i,t,\mu \right) = \frac{R_2(\mu) CV^2}{2+R_2(\mu) R_3(\mu)CV^2} \label{generalRatio_EU} \notag
   \end{equation}
   where $R_2(t)$ and $R_3(t)$ are  coefficients of RRA and RP, respectively. 
\end{proposition}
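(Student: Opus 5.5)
The plan is to start from the approximation already established in Proposition \ref{rho_local approxiamtion},
\begin{equation*}
\rho(i,t,\mu) \simeq \frac{\pi}{\mu}\,\frac{1}{1-\tfrac12\sigma^2\, u'''(\mu)/u'(\mu)},
\end{equation*}
and rewrite each factor in terms of $CV$, $R_2(\mu)$ and $R_3(\mu)$. No new expansion is needed. The work is purely algebraic substitution, using the second-order Taylor approximations of Proposition \ref{RP_local approxiamtion} for $\pi$ and of Eqs. \eqref{COT_formulation}--\eqref{COTV_formulation} for COT and COTV. The equality in the statement is therefore to be read in the same approximate, second-order sense as those results.

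\emph{Step 1: the numerator.} By Proposition \ref{RP_local approxiamtion}, $\pi \simeq \tfrac{\sigma^2}{2}A_2(\mu)$. Dividing by $\mu$ and multiplying and dividing by $\mu$ gives
\begin{equation*}
\frac{\pi}{\mu} \simeq \frac{\sigma^2}{2\mu^2}\,\mu A_2(\mu) = \tfrac12 R_2(\mu)\,CV^2 ,
\end{equation*}
which agrees with Eq. \eqref{generalQuadraRatio_EU} in the quadratic case.

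\emph{Step 2: the denominator.} Factor the third-derivative term through $u''$:
\begin{equation*}
\frac{u'''(\mu)}{u'(\mu)} = \frac{u'''(\mu)}{u''(\mu)}\cdot\frac{u''(\mu)}{u'(\mu)} = -A_3(\mu)A_2(\mu) = -\frac{R_3(\mu)R_2(\mu)}{\mu^2}.
\end{equation*}
Risk aversion ensures $u''\neq 0$ wherever $R_3$ is defined, which justifies dividing by $u''(\mu)$. It follows that
\begin{equation*}
1-\tfrac12\sigma^2\frac{u'''(\mu)}{u'(\mu)} = 1+\tfrac12 R_2(\mu)R_3(\mu)CV^2 .
\end{equation*}

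\emph{Step 3: combine.} Substituting both pieces and multiplying numerator and denominator by $2$ gives
\begin{equation*}
\rho \simeq \frac{R_2(\mu)CV^2}{2+R_2(\mu)R_3(\mu)CV^2}.
\end{equation*}

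The only delicate point is sign bookkeeping. The paper uses nonstandard signs: $A_2=u''/u'$ carries no minus sign, while $A_3=-u'''/u''$ does. I would check explicitly that $u'''/u' = -A_2A_3$ under these conventions, since $u'<0$ and $u''\le 0$ for a risk-averse user, so that the denominator comes out with a plus sign. As a consistency check, a prudent user ($u'''\ge0$) then has $R_3\ge 0$, and the denominator exceeds $2$, so the ratio is smaller than in the quadratic case.

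I would also note that the result recovers Corollary \ref{quadraticRatioProp} when $R_3=0$.
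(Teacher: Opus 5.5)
Your proposal is correct and is essentially the paper's proof. Like the paper, you substitute $\pi\simeq\frac{\sigma^2}{2}\,u''(\mu)/u'(\mu)$ from Proposition~\ref{RP_local approxiamtion} into Proposition~\ref{rho_local approxiamtion} and use $u'''(\mu)/u'(\mu)=-R_2(\mu)R_3(\mu)/\mu^2$; the only difference is that the paper divides numerator and denominator by $\mu\,u''(\mu)/u'(\mu)$ in one step rather than treating the two factors separately. One small correction: risk aversion alone gives only $u''\le 0$, not $u''\neq 0$, so dividing by $u''(\mu)$ requires strict risk aversion, an assumption the paper's proof also uses implicitly.
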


\begin{proof}
    {Proof.} By rearranging terms and substituting for $\pi$ using the Lemma \ref{RP_local approxiamtion}, we have 
    $$
    \rho \left( i,t,\mu \right) =\frac{\sigma ^2}{2\mu}\frac{u''\left( \mu \right)}{u'\left( \mu \right)}\frac{1}{1-1/2\sigma ^2\frac{u^{\prime\prime\prime}\left( \mu \right)}{u'\left( \mu \right)}}\,\,=\frac{\sigma ^2}{2\mu ^2}\frac{1}{\frac{1}{\mu}\frac{u'\left( \mu \right)}{u''\left( \mu \right)}+1/2\sigma ^2\frac{1}{\mu ^2}\left( -\mu \frac{u^{\prime\prime\prime}\left( \mu \right)}{u''\left( \mu \right)} \right)}
    $$
    Since we have defined RRA coefficient $R_2(\mu) =\mu \frac{u''\left( \mu \right)}{u'\left( \mu \right)}$ and RP coefficient $R_3(\mu) =-\mu \frac{u^{\prime\prime\prime}\left( \mu \right)}{u''\left( \mu \right)}$, the ratio is 
    $$
    \rho \left( i,t,\mu \right) =\frac{1}{2}CV^2\frac{1}{1/R_2(\mu) +1/2CV^2R_3(\mu)}=\frac{R_2(\mu) CV^2}{2+R_2(\mu) R_3(\mu)CV^2}
    $$
    This completes the proof. $\square$
\end{proof}

For a service in a certain period, its $CV^2$ is an objective measure and thus can be viewed as a fixed value for all users. Consequently, the magnitude of the ratio for a specific service primarily depends on users' intensity of risk preferences. Importantly, benchmark values of the coefficients of relative risk aversion (RRA) and prudence (RP) can be identified to categorize the intensity of users’ risk preferences. These benchmark values serve as thresholds that distinguish how users trade off reductions associated with different moments of the time distribution. Particularly, reductions in mean, variance, and skewness  respectively reflect users’ preferences for shorter average times, lower dispersion of outcomes, and reduced exposure to extreme delays. Table~\ref{tab:moment_risk_preferences} summarizes the moment reductions in time variability and interpret them in terms of distributional change, user concern, and risk preference.

\vspace{-0.8em}
\begin{table}[ht]
\centering
\caption{Moment reductions in time variability and associated risk preferences}
\label{tab:moment_risk_preferences}
    \begin{tabularx}{\textwidth}{X>{\raggedright\arraybackslash}p{4.5cm}>{\raggedright\arraybackslash}p{3.6cm}X}
    \hline
    \textbf{Moment reduced} 
    & \textbf{Distributional change} 
    & \textbf{User concern} 
    & \textbf{Risk preference} \\
    \hline
    Mean 
    & Leftward shift of the time distribution 
    & Lower average  time 
    & Risk neutral \\
    Variance 
    & Contraction of the distribution around the mean 
    & Lower dispersion or volatility of outcomes 
    & Second-order risk preference (risk aversion) \\
    Skewness 
    & Reduction in the right-tail mass 
    & Lower exposure to extreme delays 
    & Third-order risk preference (prudence) \\
    \hline
    \end{tabularx}
\end{table}


\subsubsection{Benchmark value of RRA coefficient and its interpretation} \label{sec:Benchmark value of RRA} 
Lemma \ref{benchmarkRRA} provides
the benchmark value, i.e., $R_2(t) = 1$, to partition risk-averse users into different groups according to the degree of RRA. Its proof follows directly from \textcolor{blue}{\cite{eeckhoudt2009values}} and \textcolor{blue}{\cite{eeckhoudt2009values}}.

\vspace{0.5em}
\begin{lemma}[\textcolor{blue} {\cite{gollier2001economics}} and \textcolor{blue} {\cite{eeckhoudt2009values}}]\label{benchmarkRRA}
    For an EU risk-averse user, the benchmark value of RRA coefficient governing the trade-off between time reductions and variance reductions is $1$.
\end{lemma}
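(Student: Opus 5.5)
The natural route is to make precise what ``the trade-off between time reductions and variance reductions'' means, and then reduce it to a sign condition on a derivative of the utility function. The formulation of \cite{eeckhoudt2009values} fits well. Consider a proportional perturbation of the service time, $t \mapsto k t$ with $k$ slightly below $1$. Such a scaling reduces the mean by $(1-k)\mu$ and, at the same time, reduces the variance by a factor $k^2$. The user therefore faces a combined mean-and-variance reduction.

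The question is whether the user values this scaling more for its mean component or for its variance component. Following Eeckhoudt and Schlesinger, I would translate this into the comparison of two lotteries. The first is the sure time $k\mu$ paired with the risk $\sigma x$. The second is the sure time $\mu$ paired with the scaled risk $k\sigma x$. Their expected utilities are compared through the function
\[
h(k)=\mathbb{E}\left[u\left(k(\mu+\sigma x)\right)\right],
\]
which will be differentiated at $k=1$.

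The key computation is
\[
h'(k)=\mathbb{E}\left[t\,u'(kt)\right].
\]
The sign of how this marginal value changes with the size of the risk is governed by the function $\phi(t)=t\,u'(t)$. Its derivative is
\[
\phi'(t)=u'(t)+t\,u''(t)=u'(t)\left(1+R_2(t)\right),
\]
using the paper's sign convention $R_2(t)=t\,u''(t)/u'(t)$. That convention is positive for risk-averse users, since $u'<0$ and $u''\le 0$.

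The standard Eeckhoudt--Schlesinger argument uses the convexity or concavity of $\phi$, equivalently the monotonicity of $t\,u''(t)/u'(t)$ in an appropriate sense. From this the argument concludes that the user prefers allocating the reduction to the variance rather than to the mean exactly when $R_2$ exceeds a threshold. Since $\phi'$ contains the factor $1+R_2$, I would aim to show that the threshold is $R_2=1$. Two consistency checks support this value.

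\begin{enumerate}
\item In Corollary \ref{quadraticRatioProp}, the pure quadratic utility $u=at^2+c$ has $R_2(\mu)=2a\mu^2/(2a\mu^2)=1$, and it attains the bound. The identity behind this is that $t\,u''(t)/u'(t)=1$ holds exactly for pure quadratic utility, and pure quadratics are the dividing case of the comparison.
\item The indifference computation below produces the same threshold independently.
\end{enumerate}

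For the indifference computation, I would compare the two candidate improvements using the approximations already derived. A unit mean reduction is worth $VOT(\mu)\approx -u'(\mu)/\varphi$. A variance reduction of the same proportional size as the mean reduction, $d\sigma^2=2\sigma^2\,d\mu/\mu$, is worth $\frac{1}{2}\frac{-u''(\mu)}{\varphi}\cdot 2\sigma^2/\mu$ per unit, by Eq. \eqref{COTV_formulation}. Normalizing by $\sigma^2/\mu^2$, the two values are equal exactly when $\mu\,u''(\mu)/u'(\mu)=1$, that is, when $R_2(\mu)=1$. This reproduces the benchmark: users with $R_2>1$ gain more from the proportional variance reduction, and users with $R_2<1$ gain more from the mean reduction.

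The main obstacle is reconciling two things. The first is the paper's sign convention for $u$, which is decreasing in $t$, whereas the cited lemma is stated for wealth with an increasing $u$. The second is the local second-order approximations used here versus the exact stochastic-dominance statement in \cite{eeckhoudt2009values}. I would handle the first by the substitution $s=-t$, or $s=M-t$. This substitution flips the signs of $u'$ and $u'''$ but not $u''$, and it yields exactly the paper's definition of $R_2$ without a minus sign. I would handle the second by stating the result at the level of the local approximation already used in Propositions \ref{RP_local approxiamtion} and \ref{rho_local approxiamtion}, and otherwise citing the exact theorem.
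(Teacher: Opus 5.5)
The paper does not actually derive this lemma. It cites Gollier and Eeckhoudt et al.\ and reads $R_2\gtrless 1$ as $t^2(-u'')\gtrless t(-u')$, which reflects the symmetric roles of $\mathrm{Var}(t)$ and $(\mathbb{E}t-h)^2$ in Remark~\ref{remarkQuadraRiskMoment}. Your proposal tries to supply a derivation, but none of its load-bearing steps produces the threshold $1$.

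\textbf{The $\phi$ route.} Write $u(t)=v(s)$ with $s=-t$. Then the Eeckhoudt et al.\ function $s\,v'(s)$ becomes exactly your $\phi(t)=t\,u'(t)$. However, the wealth coefficient $-s\,v''(s)/v'(s)$ becomes $-R_2(t)$, evaluated at negative wealth. The substitution preserves absolute risk aversion but not relative risk aversion, so their threshold $\mathrm{RRA}=1$ turns into $R_2=-1$. You can see this directly: $\phi'=u'(1+R_2)<0$ for every risk-averse user, so the factor $1+R_2$ gives no switch at $R_2=1$. The convexity of $\phi$, namely $\phi''=u''(2-R_3)$, is the prudence benchmark of Proposition~\ref{benchmarkRP}, not this one. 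Your check~1 is correct, but it actually refutes this route. For $u=at^2+c$ we get $\phi=2at^2$, which is strictly monotone, so the pure quadratic is not a boundary case for $\phi$. The boundary case would be $u=C\ln t$, which has $R_2=-1$.

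\textbf{The proportional-cut route.} $h'(1)=\mu\,\mathbb{E}[u'(t)]+\sigma\,\mathbb{E}[x\,u'(t)]$ is the \emph{sum} of the mean and risk components of a proportional cut, not a comparison of them. Your indifference computation does compare them. To leading order this pits $-\mu u'(\mu)$ against $-\sigma^2u''(\mu)$, giving the distribution-dependent threshold $R_2(\mu)\,CV^2=1$. The step ``normalizing by $\sigma^2/\mu^2$'' rescales only one side. It silently replaces the proportional variance cut $2\sigma^2d\mu/\mu$ by $2\mu\,d\mu$, and that is exactly where the value $1$ enters.

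\textbf{What is missing.} You need (a) an explicit yardstick under which $1$ is the natural threshold, and (b) a time-domain function whose shape changes at $R_2=1$.

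For (a), adopt as the definition of the trade-off the yardstick your normalization uses implicitly, which is also the one behind the paper's Remark: compare a cut $\Delta$ in $\sigma^2$ with an equal cut $\Delta$ in $\mu^2$. Locally these are worth $-\tfrac12u''(\mu)\Delta$ and $-u'(\mu)\Delta/(2\mu)$, and their ratio is exactly $R_2(\mu)$.

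For (b), use $\psi(y)=u(\sqrt{y})$, or equivalently the monotonicity of $u'(t)/t$ in place of $t\,u'(t)$. At $t=\sqrt{y}$ one has $\psi''(y)=u'(t)\,(R_2(t)-1)/(4t^3)$. Hence $R_2\ge 1$ (resp.\ $\le 1$) throughout iff $\psi$ is concave (resp.\ convex). That in turn holds iff $\mathbb{E}[u(t)]\le u\bigl(\sqrt{\mathbb{E}[t^2]}\bigr)$ (resp.\ $\ge$) for every service-time distribution, i.e.\ iff $\mu+\pi\ge\sqrt{\mu^2+\sigma^2}$ (resp.\ $\le$). Pure quadratics are linear in $t^2$ and form exactly the dividing case, consistent with Corollary~\ref{quadraticRatioProp}. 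Prove the lemma directly in the time domain in this form rather than importing the wealth theorem.
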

\vspace{0.5em}
In practice, such benchmark value of 1 serves as a natural threshold value for distinguishing how users trade off mean time savings against variance reductions. Based on Lemma \ref{benchmarkRRA}, two cases can be considered. When $R_2(t) > 1$, we have $t^2\left( -u''\left( t \right) \right) > t(-u'\left( t \right))$, indicating a higher intensity of risk aversion. In this case, the marginal benefit of reducing second moment  (i.e., variance) exceeds that of reducing the first moment (i.e., mean). Conversely, $R_2(t) < 1$ means $t^2\left( -u''\left( t \right) \right) < t(-u'\left( t \right))$, indicating that users prefer first moment reductions to second moment reductions. Indeed, the case of $R_2(t) \le 1$ corresponds exactly to the users with quadratic utility functions in Section \ref{Sec:SpecialQuadratic}. Based on these observations, Remark \ref{remarkQuadraRiskMoment} further elucidates the implications of quadratic utility function for risk-averse users in terms of risk and moments reduction preferences.

\begin{remark}[users' risk preferences under quadratic utility function]\label{remarkQuadraRiskMoment}
    Consider the vertex form of quadratic utility function: $u\left( t \right) =a\left( t-h \right) ^2+k$ where $h$ and $k$ are coordinates of the vertex. We assume $h\le 0$ and $k\le 0$. Based on the vertex form, the expected utility can be expressed as a function of mean and variance of time as follows:
    \begin{align}
        \mathbb{E}\left[ u\left( t \right) \right] & =\mathbb{E}\left[ a\left( t-h \right) ^2+k \right] \notag \\
        & =a\mathbb{E}\left\{ \left[ t-\mathbb{E}\left( t \right) \right] ^2 +2\left[ t-\mathbb{E}\left( t \right) \right] \left[ \mathbb{E}\left( t \right) -h \right] + \left[ \mathbb{E}\left( t \right) -h \right] ^2 \right\} + k \notag \\
        & = a\left\{ \text{Var}\left( t \right) +\left[ \mathbb{E}\left( t \right) -h \right] ^2 \right\} +k \label{EU_mean_variance}
    \end{align}
    
    If we ignore the term $k$ in the right-hand side of Eq. \eqref{EU_mean_variance}, we can clearly see that the changes of mean and variance affect $\mathbb{E}\left[ u\left( t \right) \right]$ values symmetrically, which corresponds to the case of $R_2(t) = 1$. However, as we all know, the shift of $k$ in Eq. \eqref{EU_mean_variance} can change the mean value $\mathbb{E}\left( t \right)$ but has no effects on the variance $\text{Var}\left( t \right)$. This asymmetry explains why we have $R_2(t) \le 1$ for general quadratic utility function, which in turn underlies the results in Proposition \ref{quadraticRatioExponetialTime} and Corollary \ref{quadraticRatioProp}. In other words, although users with quadratic utility functions dislike time variability, they tend to prioritize mean time savings over variability reductions.
\end{remark}

\subsubsection{Benchmark value of RP coefficient and its interpretation} Proposition \ref{benchmarkRP} establishes 2 as the benchmark value of RP coefficient and its proof is based on the idea of using multiplicative lotteries and the principle of harm dis-aggregation \textcolor{blue}{\citep{eeckhoudt2009values}}.

\begin{proposition}\label{benchmarkRP}
    For an EU risk-averse and prudent user, the benchmark value of RP coefficient governing the trade-off between reducing second moment and reducing third moment is $2$. 
\end{proposition}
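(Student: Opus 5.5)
We adapt the multiplicative-lottery and harm-disaggregation argument of \cite{eeckhoudt2009values} to time outcomes, in parallel with Lemma \ref{benchmarkRRA}. In that lemma, comparing a first-moment reduction with a second-moment reduction reduces to the sign of $t u''(t) - u'(t)$, i.e., to $R_2(t)$ versus $1$. We will show that the analogous comparison one order up reduces to the sign of $t u'''(t) + 2u''(t)$, i.e., to $R_3(t)$ versus $2$.

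\textbf{Lottery pair.} Fix a deterministic time $T$ and a multiplicative factor $k>0$. We compare two lotteries with equal mean.
\begin{itemize}
\item[(A)] The time is multiplied by a random factor $(1+\tilde y)$, with $\mathbb{E}[\tilde y]=0$, so the outcome is $T(1+\tilde y)$.
\item[(B)] The same multiplicative risk is attached to a proportionally scaled base, namely $kT$ with probability $1/2$ and $T/k$ with probability $1/2$.
\end{itemize}
We will define the ``second-moment'' and ``third-moment'' reductions as the two ways of disaggregating these harms. One combination spreads the variance-type harm; the other spreads skewness-type harm across states. The user prefers the third-moment reduction to the second-moment reduction exactly when the function $\phi(t)= t\,u'(t)$ (or, at the next order, $t\,u''(t)$) has a specific curvature sign.

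\textbf{Reduction to a derivative sign.} Write the expected-utility difference between the two disaggregated lotteries and apply Jensen's inequality to the auxiliary function $g(t)=t^2 u''(t)$. The relevant order-two comparison is governed by the convexity or concavity of $t\mapsto t u''(t)$ after the change of variable $t=e^{s}$, because multiplicative risks become additive in $s$. Then
\begin{equation*}
\frac{d}{dt}\bigl(t^{2}u''(t)\bigr)= t\bigl(2u''(t)+t u'''(t)\bigr)= t\,u''(t)\bigl(2 - R_3(t)\bigr),
\end{equation*}
using $R_3(t)=-t u'''(t)/u''(t)$. Since $u''\le 0$ for a risk-averse user and $u'''\ge 0$ for a prudent one, the sign of this derivative flips exactly at $R_3(t)=2$. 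The conclusion is then stated as follows: for $R_3>2$ the marginal benefit of reducing the third moment exceeds that of reducing the second, and for $R_3<2$ the reverse holds. Consistency with Proposition \ref{generalRatio} is a check: the factor $R_2R_3CV^2$ in that ratio is where $R_3$ is scaled against the variance term.

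\textbf{Main obstacle.} The hardest step is setting up the lottery pair correctly with negative-utility times. The sign conventions here differ from the wealth setting: $u'<0$, and $R_2$ is defined without a minus sign. The risk must be constructed so that it isolates a pure skewness change with mean and variance held fixed in the log scale. Our first attempt will be a direct transcription of the Eeckhoudt--Schlesinger construction with $s=\ln t$. We will then verify the signs with the test family $u(t)=-t^{\gamma}$, for which $R_3=2-\gamma$ up to sign and the threshold should fall at a transparent value of $\gamma$.
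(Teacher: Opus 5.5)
There is a genuine gap. You cite the right source and you end at the right inequality, $2u''(t)+tu'''(t)\gtrless 0$. But the step that turns a lottery preference into that inequality is never carried out, and the pieces you offer for it do not fit together.

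First, the lottery pair is not pinned down. In (B) the outcomes $kT$ and $T/k$ have mean $T(k+1/k)/2\neq T$ for $k\neq 1$, so the ``equal mean'' premise fails in time units. The ``two ways of disaggregating these harms'' are also never defined.

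Second, the reduction uses three different auxiliary functions, and none of them is justified:
\begin{enumerate}
\item Jensen applied to $g(t)=t^2u''(t)$ needs the sign of $g''=2u''+4tu'''+t^2u''''$. That involves $u''''$ and is not the RP condition.
\item Convexity of $s\mapsto e^su''(e^s)$ gives $tu''+3t^2u'''+t^3u''''$, which is not it either.
\item Moving to $s=\ln t$ destroys the property $\mathbb{E}[\tilde y]=0$ that Jensen needs, since a zero-mean multiplicative risk has $\mathbb{E}[\ln(1+\tilde y)]<0$.
\end{enumerate}
What you actually compute is the first derivative $(t^2u'')'=t\,u''(2-R_3)$, i.e.\ whether $t^2u''$ is monotone. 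Nothing in the proposal explains why that monotonicity governs a lottery comparison. It does so only locally: the utility cost of a small multiplicative risk at base $t$ is about $\tfrac12\mathrm{Var}(\tilde y)\,t^2u''(t)$.

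The missing piece is an explicit construction with an exact reduction, which is what the paper does. Take $S_1=[\,T_0(1-l)(1+\tilde\gamma),\,T_0\,]$ and $S_2=[\,T_0(1-l),\,T_0(1+\tilde\gamma)\,]$, each outcome with probability $1/2$, so the means agree. Then $S_2\succeq S_1$ is exactly $\vartheta(l)\le\vartheta(0)$, where $\vartheta(l)=\mathbb{E}[u(T_0(1-l)(1+\tilde\gamma))]-u(T_0(1-l))$. Requiring this for all $l$ amounts to $\partial_l\vartheta\le 0$. That is, $\mathbb{E}[\omega(1+\tilde\gamma)]\ge\omega(1)$ for $\omega(y)=y\,u'(cy)$ with $c=T_0(1-l)$.

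Jensen is applied to $\omega$ in the linear variable $y$, using $\mathbb{E}[1+\tilde\gamma]=1$. The inequality holds for every zero-mean $\tilde\gamma$ iff $\omega$ is convex. Since $\omega''(y)=c\,[2u''(t)+tu'''(t)]$ with $t=cy$, dividing by $u''<0$ gives $R_3(t)\ge 2$. So the function whose convexity matters is $t\mapsto tu'(t)$, your first guess $\phi$. Your formula lands on the right sign only because $(tu')''=t^{-1}(t^2u'')'$.

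Finally, your proposed sanity check cannot work. For $u=-t^{\gamma}$ one has exactly $R_3=2-\gamma$. Under the paper's sign conventions ($u'<0$, $u''\le 0$, $u'''\ge 0$), risk aversion plus prudence forces $\gamma\in[1,2]$, and for $\gamma\in(1,2]$ this gives $R_3<1$. The threshold $R_3=2$ would need $\gamma=0$, a constant utility, so that family never reaches it.
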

\begin{proof}
    {Proof.} Consider a user facing the choice of two instances $S_1$ and $S_2$, as shown in Figure \ref{figureRP}, for a mobility service. In Figure \ref{figureRP}, $T_0$ is the initial time of this service with no time variability; $l$ represents a deterministic time and $\tilde{\gamma}$ is a zero-mean variability with $\tilde{\gamma}\in \left[ -1,+\infty \right)$. By construction, $S_1$ and $S_2$ have the same expected service time, while $S_1$ has lower variance but higher skewness than $S_2$.
    \begin{figure}[ht]
    \centering
    \vspace{-0.8em}
    \includegraphics[width=0.7\textwidth]{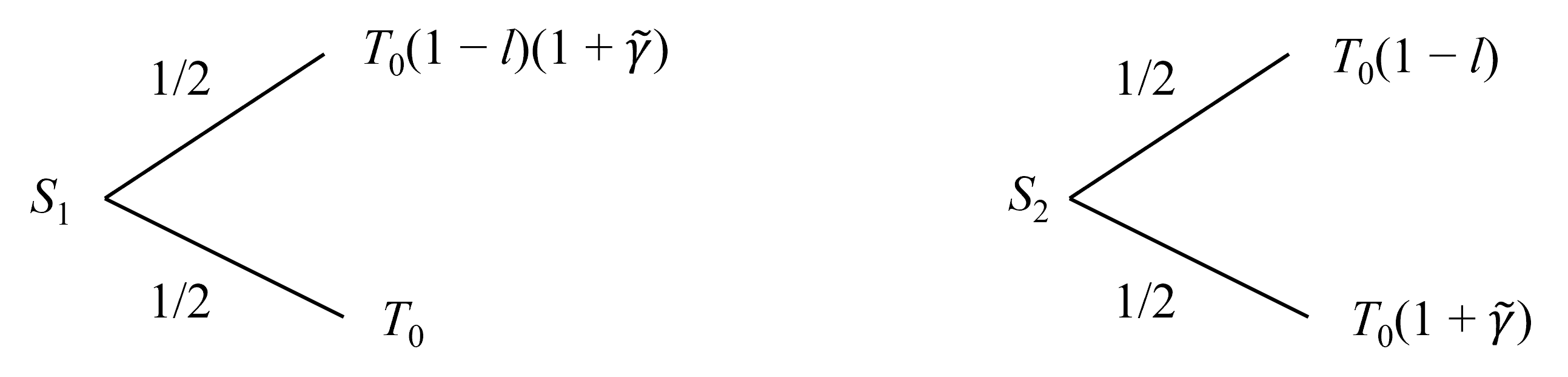}
    \vspace{-0.8em}
    \caption{Two instances of a mobility service with a certain loss and zero-mean risk.}
    \label{figureRP}
    \end{figure} 
    
    If one user prefers $S_2$ to $S_1$, then it means:
    $$
    \frac{1}{2}\mathbb{E}\left[ u\left( T_0\left( 1-l \right) \left( 1+\tilde{\gamma} \right) \right) \right] +\frac{1}{2}u\left( T_0 \right) \,\,\le \,\,\frac{1}{2}u\left( T_0\left( 1-l \right) \right) +\frac{1}{2}\mathbb{E}\left[ u\left( T_0\left( 1+\tilde{\gamma} \right) \right) \right] 
    $$
    Namely, we have 
    $$
    \mathbb{E}\left[ u\left( T_0\left( 1-l \right) \left( 1+\tilde{\gamma} \right) \right) \right] -u\left( T_0\left( 1-l \right) \right) \,\,\le \,\,\mathbb{E}\left[ u\left( T_0\left( 1+\tilde{\gamma} \right) \right) \right] -u\left( T_0 \right)  
    $$
    Denoting $\vartheta \left( l,\tilde{\gamma};T_0 \right) =\mathbb{E}\left[ u\left( T_0\left( 1-l \right) \left( 1+\tilde{\gamma} \right) \right) \right] -u\left( T_0\left( 1-l \right) \right)$, we have to show $\vartheta \left( l,\tilde{\gamma};T_0 \right) \le \vartheta \left( 0,\tilde{\gamma};T_0 \right)$ for all $l > 0$. So, its partial derivative in terms of $l$ should be non-positive, namely
    $$
    \frac{\partial \vartheta \left( l,\tilde{\gamma};T_0 \right)}{\partial l}=-T_0\mathbb{E}\left[ \left( 1+\tilde{\gamma} \right) u'\left( T_0\left( 1-l \right) \left( 1+\tilde{\gamma} \right) \right) \right] +T_0u'\left( T_0\left( 1-l \right) \right) \le 0
    $$
    By rearrangement, we have $\mathbb{E}\left[\left( 1+\tilde{\gamma} \right) u'\left( T_0\left( 1-l \right) \left( 1+\tilde{\varepsilon} \right) \right) \right] \ge u'\left( T_0\left( 1-k \right) \right)$. Now, let us define $\omega \left(l,\tilde{\gamma};T_0 \right) =\left( 1+\tilde{\gamma} \right) u'\left( T_0\left( 1 - l \right) \left( 1+\tilde{\gamma} \right) \right)$, then it implies $\mathbb{E}\left[ \omega \left( l,\tilde{\gamma};T_0 \right) \right] \ge \omega\left( l,\mathbb{E}\left[ \tilde{\gamma} \right] ;T_0 \right)$. In other words, $\omega \left(l,\tilde{\gamma};T_0 \right)$ should be a convex function with respect to $\tilde{\gamma}\in \left[ -1,+\infty \right)$.
    $$
    \frac{\partial ^2\omega \left( l,\tilde{\gamma};T_0 \right)}{\partial \tilde{\gamma}}=2T_0\left( 1-l \right) u''\left( t\left( 1-l \right) \left( 1+\tilde{\gamma} \right) \right) +\left( 1+\tilde{\gamma} \right) T_{0}^{2}\left( 1-l \right) ^2u^{\prime\prime\prime}\left( T_{0}^{2}\left( 1-l \right) \left( 1+\tilde{\gamma} \right) \right) \ge 0
    $$
    By rearrangement, we can further have the following inequality:
    $$
    2u''\left( T_0\left( 1-l \right) \left( 1+\tilde{\gamma} \right) \right) +\left( 1+\tilde{\gamma} \right) T_0\left( 1-l \right) u^{\prime\prime\prime}\left( T_0\left( 1-l \right) \left( 1+\tilde{\gamma} \right) \right) \ge 0
    $$
    Let $t = \left( 1+\tilde{\gamma} \right) T_0\left( 1-l \right)$, then we have $R_3\left( t \right) =-t\frac{u^{\prime\prime\prime}\left( t \right)}{u''\left( t \right)}\geq 2$. This completes the proof. $\square$
\end{proof}

Proposition \ref{benchmarkRP} and its proof show that, if RP coefficient $R_3(t) > 2$, for the same expected time, users prefer an alternative choice with a lower skewness, even it has higher variance. On the contrary, if $R_3(t) < 2$, users prefer an alternative choice with a higher variance even it has lower skewness. Consequently, $2$ serves as a benchmark value for the balance between reducing second moment  and reducing third moment. Although skewness-related behavior received limited attention in theoretical transport studies, it has been recognized by empirical studies, such as \textcolor{blue}{\cite{van2005monitoring}} and \textcolor{blue}{\cite{van2008travel}}. Furthermore, extensive studies have reported the right-skewed tail of travel time distributions in the literature \textcolor{blue}{\citep[e.g.,][]{van2005monitoring, susilawati2013distributions, zang2024value}}.




The benchmark values of RRA and RP provide a useful baseline for characterizing the intensity of different risk attitudes. In particular, empirically estimated ratios, such as those obtained from stated-preference surveys,  can be compared against these benchmark values to facilitate interpretation. As discussed in \textcolor{blue}{\cite{zang2022ttrb}}, a wide range of models requires users' risk parameters as inputs, including the preference parameters in VOR models, the probabilities needed for reliability measures, and the risk parameters in route choice models. However, specifying these values ex ante is challenging, and in many cases it is even impractical to gather each user's parameters through surveys, as users may not even be aware of their precise values and also these parameters may vary rather than remain fixed. The proposed benchmark provides a practical reference point, enabling the identifications of users' risk preferences through long-term and large-scale individual-level datasets collected through non-survey-based approaches, such as GPS and cellular data.

\subsubsection{The rate of return from reducing time variability}
With general differentiable utility functions, the reliability ratio can be expressed as follows:
$$
\eta(i, t, \mu) =\frac{1}{1+1/2R_2(\mu) R_3(\mu) CV^2}
$$
We can thus derive Corollary \ref{RRMarginalGeneral}, which shows that higher-order risk preference is associated with diminishing marginal benefits from reducing time variability. Specifically, although stronger higher-order risk preferences lead users to incur larger additional time costs to hedge against variability, the marginal benefit of further reductions in time variability becomes smaller. This result is intuitive: as mobility services represent the derived demand undertaken to achieve other objectives, consuming the service itself does not generate intrinsic utility. In addition, such a diminishing marginal effect is widely observed in other related scenarios. For example, in studies of the value of travel time and travel time reliability, \textcolor{blue}{\cite{metz2008myth}} found that the additional benefit derived from further travel time savings tends to diminish, while \textcolor{blue}{\cite{zang2024value}} found that the additional benefit derived from further improving travel time reliability also becomes diminishing. Furthermore, within the framework of reliable network design, \textcolor{blue}{\cite{xu2014modeling}} identified a diminishing marginal effect of improving network reliability performance relative to the construction budget.

\vspace{0.5em}
\begin{corollary}\label{RRMarginalGeneral}
    For risk-averse and prudent EU users, the marginal benefit of reducing time variability
    is at most $1/2$ times that of the marginal benefit of reducing time. Namely, $\eta(i, t, \mu) \le \frac{1}{2}$.
\end{corollary}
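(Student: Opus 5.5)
The plan is to start from the closed form of the reliability ratio in Eq. \eqref{RR} and rewrite it in terms of the relative coefficients, exactly as was done for $\rho$ in Proposition \ref{generalRatio}. Since
\[
-\frac{u'''(\mu)}{u'(\mu)}=\frac{u''(\mu)}{u'(\mu)}\cdot\left(-\frac{u'''(\mu)}{u''(\mu)}\right),
\]
we have $-\tfrac12\sigma^2 u'''(\mu)/u'(\mu)=\tfrac12 R_2(\mu)R_3(\mu)CV^2$. This gives the displayed expression
\[
\eta(i,t,\mu)=\frac{1}{1+\tfrac12 R_2(\mu)R_3(\mu)CV^2}.
\]
The statement $\eta\le \tfrac12$ is therefore equivalent to $R_2(\mu)R_3(\mu)CV^2\ge 2$. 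The whole proof reduces to verifying this product inequality.

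Next comes the sign bookkeeping. A risk-averse user has $u'<0$ and $u''\le 0$, so $R_2\ge 0$. A prudent user has $u'''\ge 0$, so $R_3=-\mu u'''/u''\ge 0$. This alone only yields $\eta\le 1$, which is the quadratic benchmark of Corollary \ref{EqualRRas1}.

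To reach $\tfrac12$, I will rely on the benchmark values just established. The phrase ``risk-averse and prudent'' in the sense of the surrounding discussion, namely users displaying genuine higher-order risk preference, means users lying on the upper side of both thresholds:
\begin{itemize}
\item $R_2(\mu)\ge 1$ by Lemma \ref{benchmarkRRA}, so that variance reductions weigh at least as much as mean reductions;
\item $R_3(\mu)\ge 2$ by Proposition \ref{benchmarkRP}, so that skewness reductions weigh at least as much as variance reductions.
\end{itemize}
Together with the reference service process of Section \ref{PossionService}, where $CV=1$ (or more generally $CV\ge 1$), this gives $R_2R_3CV^2\ge 1\cdot 2\cdot 1=2$. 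Substituting into the expression above yields $\eta\le 1/(1+1)=\tfrac12$. Equality holds at the benchmark point $R_2=1$, $R_3=2$, $CV=1$, which I would exhibit as a tightness check.

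The main obstacle is that the inequality is \emph{not} implied by the bare sign conditions $u''\le 0$ and $u'''\ge 0$. For small $CV$ or weak prudence, $\eta$ approaches $1$. So the crux is making the hypothesis precise: the thresholds $R_2\ge1$ and $R_3\ge2$ must come from the benchmark results, and the variability level must satisfy $CV\ge1$ as in the Poisson case. I would state these conditions explicitly in the proof rather than leave them implicit. I would also remark that when the conditions fail, the same formula gives the sharp value $\eta=1/(1+\tfrac12R_2R_3CV^2)\in(\tfrac12,1]$. This preserves the qualitative message of diminishing marginal benefit, since $\eta$ is decreasing in each of $R_2$, $R_3$ and $CV$.
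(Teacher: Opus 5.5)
Your reduction of the claim to $R_2(\mu)R_3(\mu)CV^2\ge 2$ matches the paper's. The way you then obtain that inequality is a genuine gap: you add three hypotheses the corollary does not contain, namely $R_2(\mu)\ge 1$, $R_3(\mu)\ge 2$ and $CV\ge 1$. Under Definition~\ref{RARPRiskAttitudes}, ``risk-averse and prudent'' means only $u''\le 0$ and $u'''\ge 0$. Lemma~\ref{benchmarkRRA} and Proposition~\ref{benchmarkRP} only locate indifference thresholds; they do not say such users lie above them. The paper itself places risk-averse quadratic users at $R_2\le 1$. The Poisson section gives $CV=1$ for exponential times only, and nothing in the statement restricts $CV$. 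Your own closing remark, that $\eta$ can lie in $(\tfrac12,1]$ when these conditions fail, concedes that you prove a conditional variant, not the corollary.

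The idea you are missing is the one the paper uses, and it never touches the benchmark values. The paper requires the second-order average value of time in Eq.~\eqref{VOTusingVOTT} to be nonnegative:
\[
VOT(\mu)-\tfrac12\sigma^2u'''(\mu)/\varphi\ge 0 .
\]
Using $\varphi<0$ and $u'(\mu)<0$, it rewrites this as $-\tfrac12\sigma^2u'''(\mu)/u'(\mu)\ge 1$, i.e.\ $\tfrac12R_2(\mu)R_3(\mu)CV^2\ge 1$. Via Eq.~\eqref{RR} this gives $\eta\le\tfrac12$ for every $CV$.

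Be aware that this step rests entirely on the sign $\varphi<0$. Under $\varphi>0$, the sign that makes $VOT(\mu)=-u'(\mu)/\varphi$ and the COTV of Lemma~\ref{EstimatingCOTV} nonnegative for risk-averse users, the same chain of inequalities reverses. It then yields $R_2(\mu)R_3(\mu)CV^2\le 2$, i.e.\ $\eta\ge\tfrac12$. So you are right that $u''\le 0$ and $u'''\ge 0$ alone cannot deliver the bound. The paper bridges this with the constraint $VOT\ge 0$ under its $\varphi<0$ convention. If you follow that route, state the convention and the constraint explicitly instead of substituting benchmark thresholds for hypotheses.
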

\vspace{0.5em}
\begin{proof}
    {Proof.} If users dislike longer time, then by definition we have $VOT \ge 0$. It follows from Eq. \eqref{VOTusingVOTT} that $VOT\left( \mu \right) -1/2\sigma ^2\frac{u^{\prime\prime\prime}\left( \mu \right)}{\varphi} \ge 0$. Namely, 
    \begin{equation*}
        -\frac{1}{\varphi}u'\left( \mu \right) \ge 1/2\sigma ^2\frac{u^{\prime\prime\prime}\left( \mu \right)}{\varphi}\ \Longrightarrow u'\left( \mu \right) \ge -1/2\sigma ^2u^{\prime\prime\prime}\left( \mu \right) \ \Longrightarrow 1 \le -\frac{1}{2}\sigma ^2\frac{u^{\prime\prime\prime}\left( \mu \right)}{u'\left( \mu \right)} 
    \end{equation*}
    where the second inequality and the third inequality are implied by $ \varphi <0 $ and $ u'\left( \mu \right) <0 $, respectively. With the definition of RRA coefficient and RP coefficient, we could have 
    \begin{equation*}     
        \frac{2}{\sigma ^2}\cdot \mu ^2 \le -\mu ^2\frac{u^{\prime\prime\prime}\left( \mu \right)}{u'\left( \mu \right)}=\left( -\mu \frac{u^{\prime\prime\prime}\left( \mu \right)}{u''\left( \mu \right)}\,\, \right) \cdot \left( \mu \frac{u''\left( \mu \right)}{u'\left( \mu \right)} \right) = R_2\left( \mu \right) R_3\left( \mu \right) 
    \end{equation*}
    That is, we have $2 \le CV^2 \cdot R_2\left( \mu \right) R_3\left( \mu \right)$ and therefore $\eta(i, t, \mu) \le \frac{1}{2}$. This completes the proof. $\square$
\end{proof}

\section{Extensions to non-expected utility models}
This section extends the proposed framework to non-expected utility models. We first examine the dual theory (DT; \textcolor{blue}{\cite{yaari1987dual}}), which captures users’ perceptions of probabilities, and then consider the more general rank-dependent utility (RDU; \textcolor{blue}{\cite{quiggin1982theory}}). Note that the RDU encompasses both EU and DT as special cases and forms the basis of prospect theory \textcolor{blue}{\citep{tversky1992advances}}.

Below we first introduce dual moments, which play a vital role in generalizing the results derived under EU models to non-EU model settings, as will be demonstrated in Sections \ref{sec:resultDU} and \ref{sec:resultRDU}.

\subsection{Dual moments}\label{sec:DualMoments}
For time $t$ with CDF $F(t)$, its first and second moments, $m_1$ and $m_2$, can be written by
\begin{equation}
    m_1 \coloneq \int{tdF\left( t \right)} \text{,\   }m_2 \coloneq \int{\left( t-m_1 \right) ^2dF\left( t \right)} \notag
\end{equation}
Following \textcolor{blue}{\cite{eeckhoudt2022dual}}, if we treat $m_2$ as a primal moment, the associated dual moment can be written as
\begin{equation}
    \overline{m}_2 \coloneq \int{\left( t-m_1 \right) d\left( F\left( t \right) \right) ^2} \label{secondDualMean}
\end{equation}

Therefore, a more precise terminology for $\overline{m}_2$ is \textit{the second dual moment about the mean}\endnote{\textcolor{blue}{\cite{eeckhoudt2022dual}} termed such dual moment as the \textit{maxiance} in the vein of the variance as $\overline{m}_2$ represents the expected best outcome obtained from two independent draws of random times.} of $t$. Note that $(F(t))^2$ is the CDF of maximum of independent and identically distributed random time. If we only consider $t^{(1)}$ and $t^{(2)}$ as two independent copies of $t$, the dual moment can be also written as 
\begin{equation}
    \overline{m}_2 \coloneq \mathbb{E}\left[ \max \left( t^{\left( 1 \right)},\ t^{\left( 2 \right)} \right) \right] -\mathbb{E}\left[ t \right] \label{secondDualMeanIID}
\end{equation}
Eq. \eqref{secondDualMeanIID} shows that $\overline{m}_2$ admits an order-statistic interpretation: it equals the expected value of the largest outcome from two independent draws of the risk. For more details of mean order statistics in the statistics literature, we refer the readers to \textcolor{blue}{\cite{david2004order}}. Similarly, the \textit{second dual moment about the variance} of $t$ should be computed as
\begin{equation}
    \overline{m}_{2}^{2} \coloneq \int{\left( t-m_1 \right) ^2d\left( F\left( t \right) \right) ^2} \label{secondDualVariance}
\end{equation}
Consider two independent copies $t^{(1)}$ and $t^{(2)}$ of $t$ again, we have $\overline{m}_{2}^{2} \coloneq \text{Var}\left[ \max \left( t^{\left( 1 \right)},\,\,t^{\left( 2 \right)} \right) \right]$.
Specifically, if we only consider the zero mean variability $\tilde{\xi}$ in random time, to be shown in next section, then we have 
$$
\overline{m}_{2}^{2}(\tilde{\xi})=\mathbb{E}\left[ \max \left( \left( \tilde{\xi}^{\left( 1 \right)} \right) ^2,\,\,\left( \tilde{\xi}^{\left( 2 \right)} \right) ^2 \right) \right] \ -\ \mathbb{E}\left[ \tilde{\xi} \right]
$$

\subsection{Results under dual theory}\label{sec:resultDU}
Under EU framework, random service time is characterized by its probability distribution, specified by the the PDF $f(t)$ and the CDF $F(t)$. In practice, however, users often perceive event probabilities only; they may overestimate some probabilities while underestimating others. To capture this behavior, dual theory (DT) replaces the CDF $F(t)$ with $w(F(t))$ where the probability weighting function $w$: $[0, 1]\longrightarrow[0, 1]$ satisfies the following conditions: $w\left( 0 \right) =0,\ w\left( 1 \right) =1,\ w'>0$. Then, the utility representation of a user for random time under DT is 
\begin{equation} \label{EU_DT}
    \mathbb{E}\left[ u_{\text{dual}}\left( t \right) \right] = \int{-tdw(F(t))},
\end{equation}
where strict concavity, i.e., $w^{\prime\prime} <0 $, implies aversion to mean-preserving spreads, which is analogous to the risk aversion in EU framework (i.e., $u^{\prime\prime} < 0$).

\subsubsection{Dual moments and variability premium}
Under DT, variability is introduced through probability (i.e.,  small probabilities attached to fixed payoffs), which is dual to the case of EU in which variability arises from random payoffs with equal probability. For illustration purpose, Figure \ref{figureEUDUVariability} presents illustrative examples of binary variability under EU and DT cases, highlighting their differences and underlying relationships for the same certain instance $(T_0)$.
\begin{figure}[ht]
    \centering
    \vspace{-0.8em}
    \includegraphics[width=0.90\textwidth]{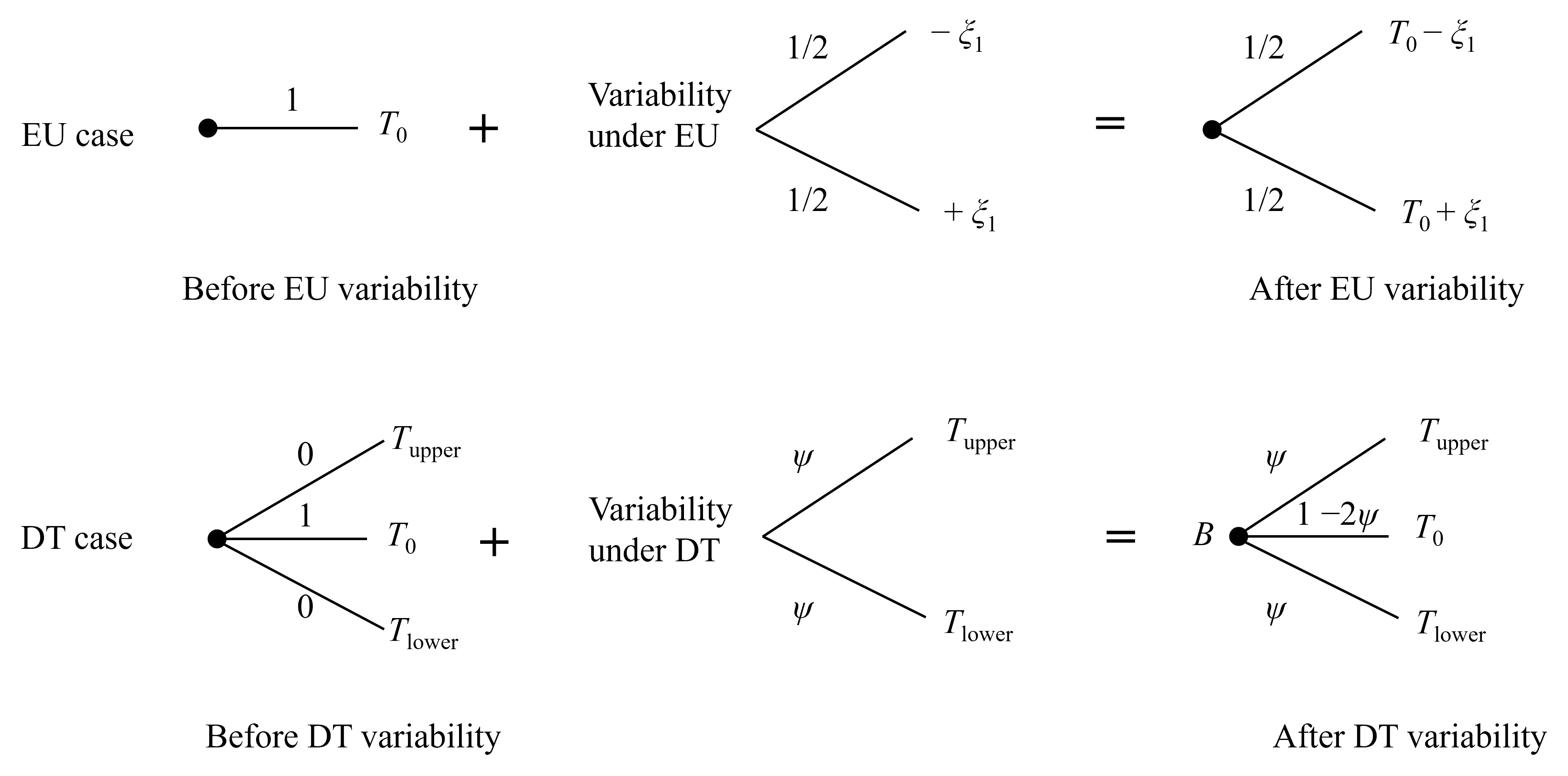}
    \vspace{-0.8em}
    \caption{Illustration of service instances with EU and DT variability where $\xi_1$ is a zero-mean variability and $\psi$ is probability.}
    \label{figureEUDUVariability}
\end{figure} 

Let us consider a more general instance$_{2\psi}(\mu)$, in which there are $n + 2$ possible outcomes. Specifically, we have $T_{\text{min}}$ with probability $p_0 - \psi$, $\mu$ with probability $2\psi$, $T_{\text{max}}$ with probability $1 - p_0 - \psi$, and all other outcomes with 0 as probability. Then, dual zero mean n-state variability $\tilde{\xi}$ is added into instance$_{2\psi}(\mu)$. Specifically, $n$ possible outcomes are in ascending order, i.e., $\xi _1\le \cdots \le \xi _i\le \cdots \le \xi _n$; $\xi _i$ is assumed to be $0$ and we assume each of $n$ possible outcomes occur with the equal probability of $\frac{2\psi}{n}$.The service change due to dual variability is illustrated in Figure \ref{figureServiceWithRiskDU}. As we allow adjacent states to be the same, such variability $\tilde{\xi}$ can be viewed as variability with unequal state probability as well. In addition, though we use discrete representations, it naturally extends to continuous representations if $n \longrightarrow \infty$. Specifically, if $2\psi = 1$, it will be completely the same as instances illustrated in Figure \ref{figureServiceWithRiskEU}; this is how we can generalize the following analysis to add variability to all probability. From the perspective of DT, it is conceptually more appropriate to formulate the analysis in terms of changes affecting only a portion of the probability distribution, rather than uniformly perturbing all probabilities.
\begin{figure}[ht]
    \centering
    \vspace{-0.8em}
    \includegraphics[width=0.9\textwidth]{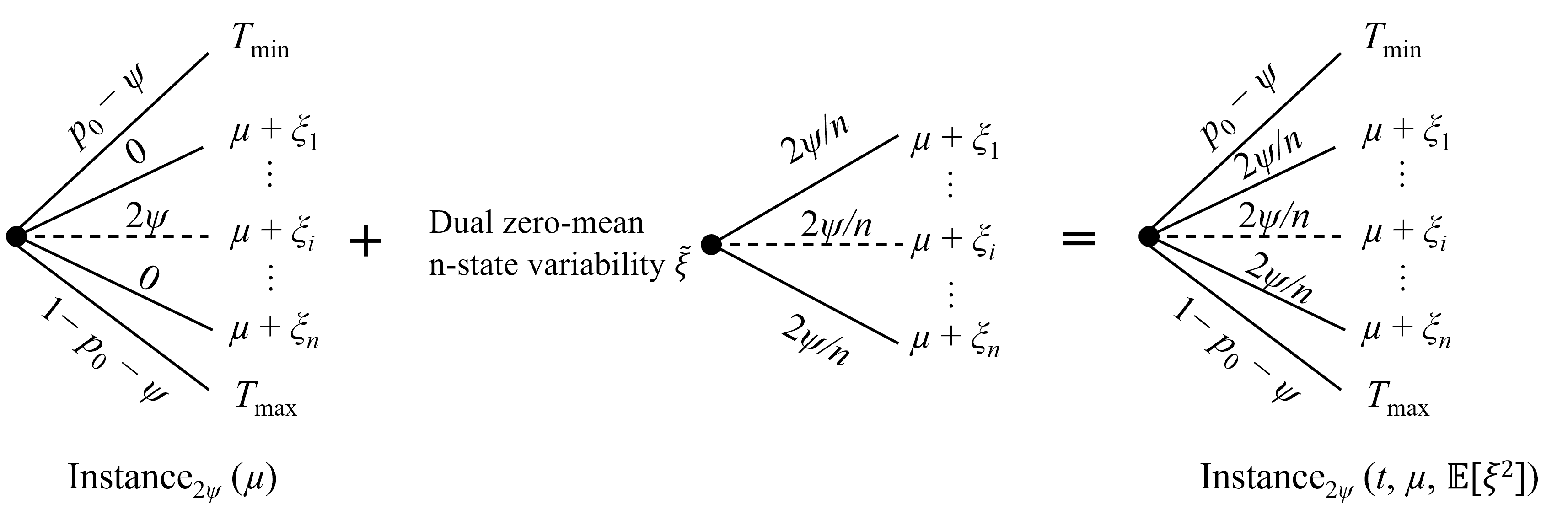}
    \vspace{-0.8em}
    \caption{Service instances before/after dual zero-mean $n$ state variability in which $\xi _1\le \cdots \le \xi _i=0\le \cdots \le \xi _n$.}
    \label{figureServiceWithRiskDU}
\end{figure} 

According to definitions and formulations of dual moments introduced in Section \ref{sec:DualMoments} (i.e., mainly Eq. \eqref{secondDualMean}), the dual moment $\overline{m}_2$ of variability $\tilde{\xi}$ can be given by
\begin{equation}
    \overline{m}_2=\sum_{i=1}^n{\xi _i}p\left( \xi _i \right) \left( 2\sum_{j=1}^i{p\left( \xi _j \right)}\right) =\frac{4\psi ^2}{n^2}\sum_{i=1}^n{\xi _i}\left( 2i\right) =\frac{4\psi ^2}{n^2}\sum_{i=1}^n{\xi _i}2i \label{eqDualmoment_mean}
\end{equation}

Then, as illustrated by Figure \ref{figureVPDU}, the variability premium under dual theory is defined such that users are \textit{indifferent} between risky instance$_{2\psi}(t, \mu, \mathbb{E}[\tilde{\xi}])$  and the corresponding certain instance$_{2\psi}(\mu + \pi_{\text{dual}})$. Proposition \ref{RP_DT} gives the formulation of $\pi_{\text{dual}}$ under DT. 

\begin{figure}[ht]
    \centering
    \vspace{-0.8em}
    \includegraphics[width=0.55\textwidth]{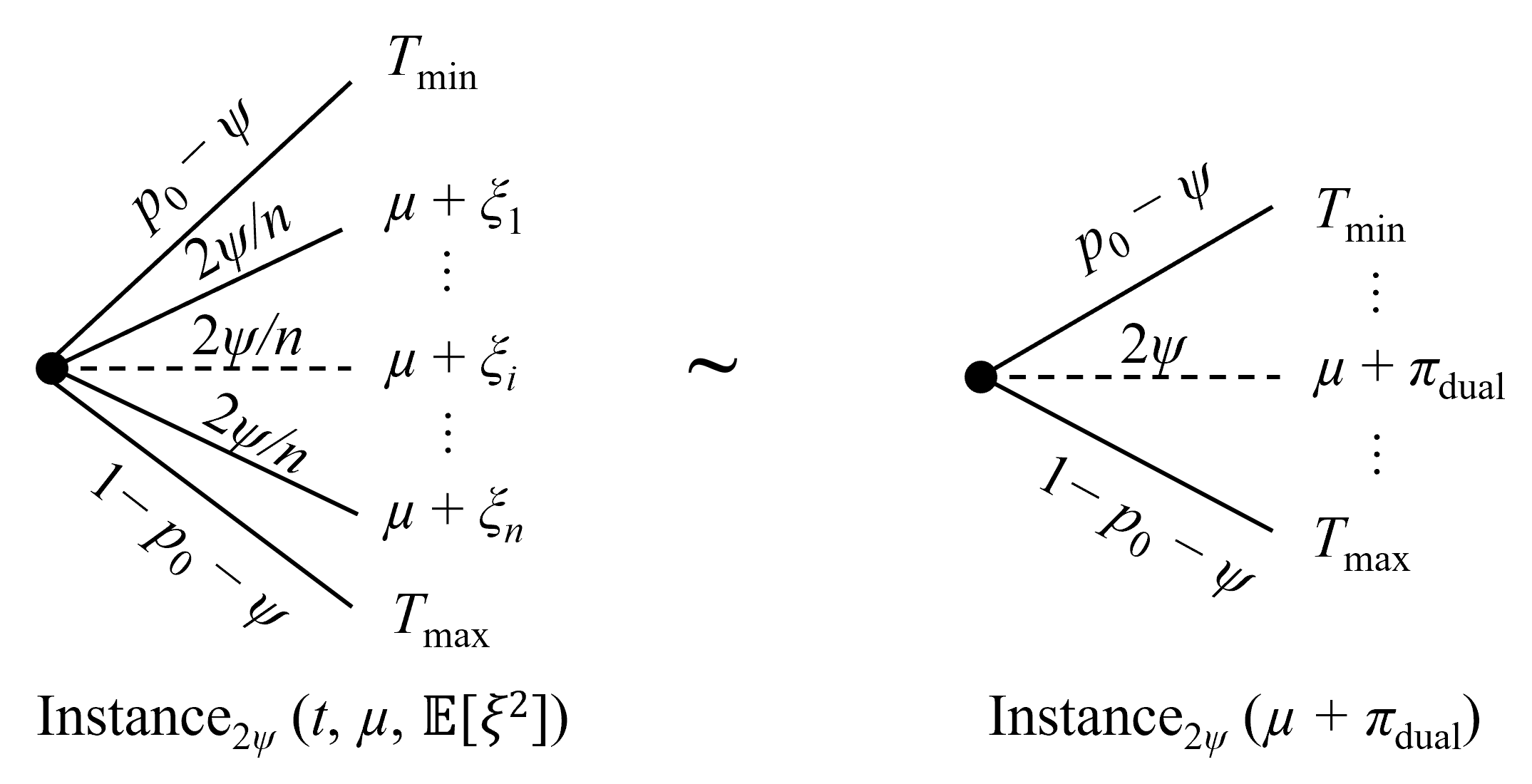}
    \vspace{-0.8em}
    \caption{Variability premium for DT users.}
    \label{figureVPDU}
\end{figure} 
\begin{proposition}\label{RP_DT}
    For a service with random service time $t$, the variability premium of a DT user with probability weighting function $w(F(t))$ can be approximated by
    \begin{equation} \label{RP_local approxiamtion_DU}
        \pi _{\text{dual}}\simeq \frac{1}{2}\cdot \frac{w''\left( p_0 \right)}{w'\left( p_0 \right)}\cdot \overline{m}_2 \notag
    \end{equation}
\end{proposition}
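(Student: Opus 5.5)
Following the proof of Proposition \ref{RP_local approxiamtion}, I would work in the probability plane and treat the probability mass $\psi$ as the small scale parameter. Since the DT utility is linear in outcomes, the only curvature comes from $w$, so expanding in $\psi$ should produce $w''/w'$ just as expanding in $\sigma$ produced $u''/u'$ under EU.

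First, write both DT utilities of instance$_{2\psi}$ explicitly. By Eq. \eqref{EU_DT}, a discrete lottery with outcomes ranked in time has utility $-\sum_k t_k\,[w(F_k)-w(F_{k-1})]$, where the $F_k$ are cumulative probabilities. In the certain instance$_{2\psi}(\mu+\pi_{\text{dual}})$, the middle outcome $\mu+\pi_{\text{dual}}$ occupies the cumulative interval $[p_0-\psi,\,p_0+\psi]$. In the risky instance, the outcome $\mu+\xi_i$ occupies $[p_0-\psi+\tfrac{2\psi}{n}(i-1),\ p_0-\psi+\tfrac{2\psi}{n}i]$. I would assume $\psi$ is small enough that the support of $\mu+\tilde\xi$ lies between $T_{\min}$ and $T_{\max}$, so the outer terms are identical in the two instances and cancel. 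The indifference condition then reduces to
\[
(\mu+\pi_{\text{dual}})\,[w(p_0+\psi)-w(p_0-\psi)] \;=\; \sum_{i=1}^n (\mu+\xi_i)\Big[w\big(p_0-\psi+\tfrac{2\psi i}{n}\big)-w\big(p_0-\psi+\tfrac{2\psi (i-1)}{n}\big)\Big].
\]
The $\mu$ terms telescope and cancel, which leaves $\pi_{\text{dual}}$ as a ratio of two increments of $w$.

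Next, posit $\pi_{\text{dual}}=g(\psi)$ as in Proposition \ref{RP_local approxiamtion} and Taylor expand both sides around $\psi=0$.
\begin{itemize}
\item Each bracket on the right equals $w'(p_0)\tfrac{2\psi}{n}+\tfrac12 w''(p_0)\big[(a_i)^2-(a_{i-1})^2\big]+O(\psi^3)$, with $a_i=-\psi+2\psi i/n$.
\item The first-order term vanishes because $\mathbb{E}[\tilde\xi]=\sum_i\xi_i/n=0$.
\item The second-order term gives $\tfrac12 w''(p_0)\sum_i\xi_i\big[(a_i)^2-(a_{i-1})^2\big]$. Expanding, $(a_i)^2-(a_{i-1})^2=\tfrac{4\psi^2}{n^2}(2i-1)-\tfrac{4\psi^2}{n}$. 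The constant part is killed by $\sum_i\xi_i=0$, and the remainder matches Eq. \eqref{eqDualmoment_mean} up to the usual discretization term: $\sum_i\xi_i(2i-1)$ versus $\sum_i\xi_i\,2i$, whose difference $\sum_i\xi_i$ is zero anyway.
\item The left side is $\pi_{\text{dual}}\,[2\psi w'(p_0)+O(\psi^3)]$.
\end{itemize}
Equating the leading terms gives $\pi_{\text{dual}}\cdot 2\psi\,w'(p_0)\simeq \tfrac12 w''(p_0)\,\overline m_2$, up to how $\overline m_2$ is normalized relative to the mass $2\psi$.

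The main obstacle is this bookkeeping of scale. Eq. \eqref{eqDualmoment_mean} is normalized with the probability $2\psi/n$ already inside $\overline m_2$, so I must check whether the factor $2\psi$ on the left is absorbed into that definition, i.e. whether $\overline m_2$ is taken relative to the local mass. I would fix the convention so that the $2\psi$ case reduces to Eq. \eqref{secondDualMean} when $2\psi=1$, and match constants accordingly to land on the stated $\tfrac12\,\tfrac{w''(p_0)}{w'(p_0)}\,\overline m_2$. I would also note the sign: concave $w$ together with the cost orientation of $-t$ makes the premium a positive added time. I would then state the result with "$\simeq$", meaning up to second order in $\psi$, exactly as in Proposition \ref{RP_local approxiamtion}.
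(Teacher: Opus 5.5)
Your proposal follows the paper's proof essentially step for step. The paper uses the same indifference equation on the mass $[p_0-\psi,p_0+\psi]$, expands the $w$-increments to second order at $p_0$, kills the first-order term with $\sum_i\xi_i=0$, and identifies the remaining sum with $\overline{m}_2$ to get $\pi_{\text{dual}}\simeq\frac{1}{2}\cdot\frac{1}{2\psi}\frac{w''(p_0)}{w'(p_0)}\overline{m}_2$; it then settles your normalization worry exactly as you propose, by setting $2\psi=1$.

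One correction to your side remark on the sign, which does not affect the formula itself. With $w''<0$, $w'>0$ and $\overline{m}_2\ge 0$, the formula gives $\pi_{\text{dual}}\le 0$. Indeed, concave $w$ gives $w(F)\ge F$, hence $\int t\,dw(F(t))\le\mu$. So a positive time premium requires $w$ to be convex on the CDF of time, not concave.
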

\begin{proof}
    {Proof.} If DT users are indifferent between risky instance$_{2\psi}(t, \mu, \mathbb{E}[\tilde{\xi}])$  and certain instance$_{2\psi}(\mu + \pi_{\text{dual}})$ as shown in Figure \ref{figureVPDU}, it indicates that
    \begin{equation} \label{RP_DT_indifferent}
        \left( w( p_0+\psi ) -w( p_0-\psi \right)) ( \mu +\pi _{\text{dual}}) =\sum_{i=1}^n{\left( w( p_0-\psi +i\cdot \frac{2\psi}{n}) -w( p_0-\psi +( i-1 ) \cdot \frac{2\psi}{n}) \right) \left( \mu +\xi _i \right)}
    \end{equation}
    By solving this equation, we can derive
    \begin{equation}\label{piDual}
    \pi _{\text{dual}}=\sum_{i=1}^n{\frac{w\left( p_0-\psi +i\cdot \frac{2\psi}{n} \right) -w\left( p_0-\psi +\left( i-1 \right) \cdot \frac{2\psi}{n} \right)}{w\left( p_0+\psi \right) -w\left( p_0-\psi \right)}\xi _i} \notag
    \end{equation}
    
    Applying second-order Taylor expansion to $w\left( p_0-\psi +i\cdot \frac{2\psi}{n} \right)$ and $w\left( p_0-\psi +\left( i-1 \right) \cdot \frac{2\psi}{n} \right)$ at $p_0$, we can obtain the following expression:
    $$
    w\left( p_0-\psi +i\cdot \frac{2\psi}{n} \right) -w\left( p_0-\psi +\left( i-1 \right) \cdot \frac{2\psi}{n} \right) =w'\left( p_0 \right) \frac{2\psi}{n}+\frac{1}{2}w''\left( p_0 \right) \left( -\frac{4\psi ^2}{n}+\frac{4\psi ^2\left( 2i-1 \right)}{n^2} \right)  
    $$
    Similarly, we have $w\left( p_0+\psi \right) -w\left( p_0-\psi \right) =2w'\left( p_0 \right) \psi$ by invoking second-order Taylor expansion at $p_0$. Consequently, $\pi _{\text{dual}}$ is  given by
    $$
    \pi _{\text{dual}}=\frac{1}{2}\cdot \frac{1}{2\psi}\frac{w''\left( p_0 \right)}{w'\left( p_0 \right)}\cdot \frac{4\psi ^2}{n^2}\sum_{i=1}^n{\xi _i2i}=\frac{1}{2}\cdot \frac{1}{2\psi}\frac{w''\left( p_0 \right)}{w'\left( p_0 \right)}\cdot \overline{m}_2
    $$ 
    where $\overline{m}_2$ is the dual moment about the mean of risk $\tilde{\xi}$. 
    Now, we consider adding dual variability to all possible times, and thus we have $2\psi = 1$. As a result, $\pi _{\text{dual}} = \frac{1}{2}\frac{w''\left( p_0 \right)}{w'\left( p_0 \right)}\overline{m}_2$. This completes the proof. $\square$
\end{proof}

As shown by Proposition \ref{variability premium}, the primal moment, variance, plays a fundamental measure of time variability under EU. In contrast, Proposition \ref{RP_DT} shows that the second dual moment plays an analogous role under the dual theory, standing on equal footing with variance as the fundamental measure of time variability \textcolor{blue}{\citep{eeckhoudt2022dual}}.


\subsubsection{COT, COTV, and the ratio under dual theory} As the utility function of DT is linear, so the key difference between $U(c, t)$ and $U_\text{dual}(c, t)$ is that $u(t)$ in $U(c, t)$ is replaced by $-t$ in  $U_\text{dual}(c, t)$. Thus, based on Eq. \eqref{EU_expression}, the expected utility of DT is given by
\begin{equation} \label{EU_U_dual}
    \mathbb{E}\left[ U_{\text{dual}}\left( c,\,\,t \right) \right] =-\varphi c+\int{-tdw( F\left( t \right))},
\end{equation}
and thus the VOT and COTV can be calculated by $ VOT=-\frac{1}{\varphi}$ and $COTV=\frac{-\mu +\int{tdw\left( F\left( t \right) \right)}}{\varphi}$.

Based on the above results, Proposition \ref{DualRatio} gives the formulation of the ratio of the COTV to COT.

\begin{proposition}\label{DualRatio}
    For DT risk-averse users, the ratio depends on CV of random times and dual risk aversion parameters only:
    \begin{equation} \label{generalRatio_DU}
        \rho _{\text{dual}}\left( i,t,\mu \right) =\frac{1}{2}CV_{\text{dual}}^{2}\left( \mu \frac{w''\left( p_0 \right)}{w'\left( p_0 \right)} \right) \notag  
    \end{equation}
\end{proposition}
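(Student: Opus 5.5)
The plan is to copy the EU argument of Proposition~\ref{rho_local approxiamtion}, which becomes much simpler under DT because the utility of time is linear. I would compute COT and COTV separately from Eq.~\eqref{EU_U_dual} and then take their ratio.

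\emph{Step 1: COT.} From Eq.~\eqref{EU_U_dual}, shifting $t$ to $t-\Delta t$ changes $\int -t\,dw(F(t))$ by exactly $\Delta t$, because $w(F)$ is a distribution function (recall $w(0)=0$, $w(1)=1$). Hence $VOT=-1/\varphi$ exactly, with no Taylor correction term; in the EU case that correction came from $u'''$, which vanishes here. Consequently $COT=VOT\cdot\mu=-\mu/\varphi$.

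\emph{Step 2: COTV.} From $COTV=\big(-\mu+\int t\,dw(F(t))\big)/\varphi$, I would use the indifference relation behind Proposition~\ref{RP_DT}: the certainty equivalent satisfies $\int t\,dw(F(t))=\mu+\pi_{\text{dual}}$. This is Eq.~\eqref{RP_DT_indifferent} with $2\psi=1$, so that the variability is spread over all probability mass. This gives $COTV=\pi_{\text{dual}}/\varphi$ in magnitude, i.e.\ $COTV=\pi_{\text{dual}}\cdot VOT$ up to the sign convention for $\varphi$. This is the DT analogue of Eq.~\eqref{COTV_formulation}. Then $\rho_{\text{dual}}=\pi_{\text{dual}}/\mu$, and substituting Proposition~\ref{RP_DT} yields
\[
\rho_{\text{dual}}\simeq \frac{1}{2}\,\frac{w''(p_0)}{w'(p_0)}\,\frac{\overline{m}_2}{\mu}.
\]

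\emph{Step 3: rewriting in CV form.} I would define the dual coefficient of variation by analogy with $CV^2=\sigma^2/\mu^2$, namely
\[
CV_{\text{dual}}^2:=\frac{\overline{m}_2}{\mu^2}.
\]
This uses the second dual moment as the counterpart of variance, as Section~\ref{sec:DualMoments} suggests. The expression from Step 2 can then be regrouped as
\[
\rho_{\text{dual}}=\frac{1}{2}\,CV_{\text{dual}}^2\left(\mu\frac{w''(p_0)}{w'(p_0)}\right),
\]
where $\mu w''(p_0)/w'(p_0)$ plays the role of the relative risk aversion $R_2(\mu)$ in Eq.~\eqref{generalQuadraRatio_EU}.

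\emph{Main obstacle.} The difficulty is sign and convention bookkeeping rather than analysis. Signs matter because $w''<0$ and $\varphi$ was treated with an inconsistent sign earlier in the paper. The delicate point is making sure $\overline{m}_2$, the dual moment of the zero-mean $\tilde\xi$, matches the normalization in $CV_{\text{dual}}$, so the resulting ratio is nonnegative for risk-averse users. I would handle this by taking absolute values consistently, or by defining $CV_{\text{dual}}$ so that the product comes out nonnegative. Finally, I would note that the result is exact in VOT and holds only to second order in $\psi$ through Proposition~\ref{RP_DT}.
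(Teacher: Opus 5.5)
Your proposal is correct and follows essentially the same route as the paper. Both take $VOT=-1/\varphi$ and obtain $\int t\,dw(F(t))=\mu+\pi_{\text{dual}}$ from Eq.~\eqref{RP_DT_indifferent} with $p_0=\psi=1/2$, so that $w(1)-w(0)=1$. This gives $COTV=\pi_{\text{dual}}\cdot VOT$ and $\rho_{\text{dual}}=\pi_{\text{dual}}/\mu$, after which you substitute Proposition~\ref{RP_DT}; your definition $CV_{\text{dual}}^2=\overline{m}_2/\mu^2$ simply makes explicit what the paper leaves implicit.

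The sign issue you flag is inherited from the paper, and the paper's proof does not address it either. Because $\overline{m}_2\ge 0$, the stated formula is nonpositive whenever $w''<0$. Taking absolute values would therefore prove a different formula from the one stated.
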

\begin{proof}
    {Proof.} Recalling n-state variability in Figure \ref{figureVPDU} and considering $n\longrightarrow \infty$, we have
    \begin{equation}\label{COTVDualIntegeral}
        \int{-tdw\left( F\left( t \right) \right)}= -\sum_{i=1}^n{\left( w\left( p_0-\psi +i\cdot \frac{2\psi}{n} \right) -w\left( p_0-\psi +\left( i-1 \right) \cdot \frac{2\psi}{n} \right) \right) \left( \mu +\xi _i \right)}
    \end{equation}

    Based on Eq. \eqref{RP_DT_indifferent}, we have 
    $$\int{tdw\left( F\left( t \right) \right)}=\left( \mu +\pi _{\text{dual}} \right) \left( w\left( p_0+\psi \right) -w\left( p_0-\psi \right) \right) =\mu +\pi _{\text{dual}}$$
    where $\psi = 1/2$ and $p_0 =1/2$ for Eq. \eqref{RP_DT_indifferent}. Finally, the COTV is given by 
    \begin{equation}\label{COTVDual}
        COTV=-\frac{\pi _{\text{dual}}}{\varphi}=\pi _{\text{dual}}\cdot VOT
    \end{equation}
    According to the definition of the ratio $\rho _{\text{dual}}$, we have the final result. This completes the proof. $\square$
\end{proof}

Interestingly, the result of $\rho _{\text{dual}}\left( i,t,\mu \right)$ is quite similar to that under EU framework with quadratic utility function in Eq. \eqref{generalQuadraRatio_EU} of Section \ref{Sec:SpecialQuadratic}. Namely, under DT framework, the ratio $\rho _{\text{dual}}\left( i,t,\mu \right)$ also depends on the degree of time variability and users' second-order preference of risk. Specifically, for a DT risk-averse user, the second dual moment is as significant as the variance in measuring COTV for mobility services. Based on the modeling rationality of EU and DT, we can find that the variance quantifies time variability in terms of “payoff plane” (i.e., time), the second dual moment represents a measure of time variability in terms of “probability plane” (i.e., perceived probability). With respect to quantifying users' preference, EU measures it through second-order risk preference embodied in utility function, while DT characterize them through second-order “weighting” of perception about probability.

\subsection{Results under rank dependent model}\label{sec:resultRDU}
Although the dual theory (DT) is able to address the common ratio effect, it lacks coherence. A more general and behaviorally realistic framework is the rank dependent utility (RDU) model. The key distinction between RDU and DT lies in the treatment of utility: RDU replaces the linear utility in DT with a general utility function. Accordingly, the expected utility under RDU is written as
\begin{equation} \label{EU_RDU}
    \mathbb{E}\left[ u_{\text{RDU}}\left( t \right) \right] = \int{u(t)dw(F(t))}
\end{equation}
Proposition \ref{RP_RDU} gives the formulation of $\pi_{\text{rank}}$ under RDU, which makes users indifferent between risky instance$_{2\psi}(t, \mu, \mathbb{E}[\tilde{\xi}])$  and the corresponding certain instance$_{2\psi}(\mu + \pi_{\text{rank}})$. 

\begin{proposition}\label{RP_RDU}
    For a mobility service with random service time $t$, the variability premium of an RDU user with probability weighting function $w(F(t))$ can be approximated by
    \begin{equation} \label{RP_local approxiamtion_RDU}
        \pi _{\text{rank}}\simeq \,\frac{1}{2}\,\frac{u''\left( \mu \right)}{u'\left( \mu \right)}m_2+\frac{1}{2}\,\frac{w''\left( p_0 \right)}{w'\left( p_0 \right)}\,\overline{m}_2+\frac{1}{2}\frac{u''\left( \mu \right)}{u'\left( \mu \right)}\cdot \frac{1}{2}\,\frac{w''\left( p_0 \right)}{w'\left( p_0 \right)}\left( \overline{m}_{2}^{2}-m_2 \right) 
    \end{equation}
\end{proposition}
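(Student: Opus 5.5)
We will follow the same route as in the proof of Proposition \ref{RP_DT}, now with a nonlinear utility $u$ in place of the linear one. Indifference between instance$_{2\psi}(t,\mu,\mathbb{E}[\tilde{\xi}])$ and instance$_{2\psi}(\mu+\pi_{\text{rank}})$ means that the weighted sum of utilities on the middle block of probability mass $2\psi$ must match:
\begin{equation*}
\left( w(p_0+\psi)-w(p_0-\psi)\right) u(\mu+\pi_{\text{rank}})=\sum_{i=1}^n \Delta w_i\, u(\mu+\xi_i),
\end{equation*}
where $\Delta w_i = w\left(p_0-\psi+i\tfrac{2\psi}{n}\right)-w\left(p_0-\psi+(i-1)\tfrac{2\psi}{n}\right)$. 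The outer outcomes $T_{\min}$ and $T_{\max}$ carry identical weights on both sides, so they cancel.

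The plan is a joint second-order expansion in two small quantities: the payoff perturbation (through $u$) and the probability perturbation (through $w$).

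\textbf{First step.} Expand $u(\mu+\xi_i)\simeq u(\mu)+u'(\mu)\xi_i+\tfrac12 u''(\mu)\xi_i^2$ and $u(\mu+\pi_{\text{rank}})\simeq u(\mu)+u'(\mu)\pi_{\text{rank}}$. Here $\pi_{\text{rank}}$ is treated as second order, so its quadratic term is dropped. The constant $u(\mu)$ cancels because $\sum_i\Delta w_i = w(p_0+\psi)-w(p_0-\psi)$. Dividing by $u'(\mu)\left(w(p_0+\psi)-w(p_0-\psi)\right)$ then gives
\begin{equation*}
\pi_{\text{rank}}\simeq \sum_i \tilde w_i\,\xi_i+\frac12\frac{u''(\mu)}{u'(\mu)}\sum_i \tilde w_i\,\xi_i^2,
\end{equation*}
with normalized weights $\tilde w_i=\Delta w_i/\left(w(p_0+\psi)-w(p_0-\psi)\right)$.

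\textbf{Second step.} Substitute the Taylor expansion of $\Delta w_i$ at $p_0$ already obtained in the proof of Proposition \ref{RP_DT}:
\begin{equation*}
\tilde w_i \simeq \frac1n+\frac12\frac{w''(p_0)}{w'(p_0)}\left(\frac{2\psi(2i-1)}{n^2}-\frac{2\psi}{n}\right).
\end{equation*}
For the linear sum $\sum_i\tilde w_i\xi_i$, the $1/n$ term vanishes since $\tilde\xi$ has zero mean, and the remainder reproduces $\tfrac12\frac{w''(p_0)}{w'(p_0)}\overline{m}_2$ exactly as in Proposition \ref{RP_DT}. For the quadratic sum $\sum_i\tilde w_i\xi_i^2$, the $1/n$ part yields $m_2$ and so gives the pure EU term $\tfrac12\frac{u''}{u'}m_2$. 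The correction part yields a cross term, proportional to $\frac{u''}{u'}\cdot\frac{w''}{w'}$, times a rank-weighted second moment of $\xi_i$ with weights $\propto (2i-1)/n^2$. In the limit these weights become $dF^2$, so the cross term involves $\int \xi^2\, dF^2-\int\xi^2\,dF=\overline{m}_2^2-m_2$ via Eq. \eqref{secondDualVariance}.

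\textbf{Final step.} Set $2\psi=1$ (so $p_0=1/2$) and let $n\to\infty$, so that the Riemann sums become the integrals defining $m_2$, $\overline{m}_2$ and $\overline{m}_2^2$.

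The main obstacle is the bookkeeping of the cross term: checking that the coefficient comes out as $\tfrac12\cdot\tfrac12$ and that the rank weighting genuinely produces $\overline{m}_2^2$ rather than some other order-statistic moment. The $-2\psi/n$ piece of the $w$ expansion, together with the zero-mean normalization, has to be shown to supply exactly the $-m_2$ subtraction. I would first verify the cross-term coefficient on the binary case $n=2$ before passing to the limit. A secondary point is justifying that products such as $\pi_{\text{rank}}\cdot w''$ are higher order and can be neglected, consistent with the order of approximation ($\simeq$) used in Propositions \ref{RP_local approxiamtion} and \ref{RP_DT}.
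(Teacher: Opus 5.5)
Your proposal is correct and follows essentially the same route as the paper. Both write the indifference condition on the middle probability block, expand $u(\mu+\xi_i)$ and the $w$-increments to second order at $\mu$ and $p_0$, expand $u(\mu+\pi_{\text{rank}})$ to first order, collect the $m_2$, $\overline{m}_2$ and cross terms, and then set $2\psi=1$; the paper does this bookkeeping in its Appendix. The two points you worried about both resolve: the quadratic sum gives $\tfrac12\frac{u''(\mu)}{u'(\mu)}\cdot\tfrac12\frac{w''(p_0)}{w'(p_0)}\cdot 2\psi\bigl(\sum_i\frac{2i-1}{n^2}\xi_i^2-\frac1n\sum_i\xi_i^2\bigr)$, which is exactly the stated coefficient, and no $\pi_{\text{rank}}\,w''$ product arises because the second-order terms in $w(p_0+\psi)-w(p_0-\psi)$ cancel.
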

\begin{proof}
    {Proof.} If RDU users are indifferent between the risky instance$_{2\psi}(t, \mu, \mathbb{E}[\tilde{\xi}])$ (as in Figure \ref{figureVPDU}) and a newly corresponding certain instance$_{2\psi}(\mu + \pi_{\text{rank}})$, it means
    \begin{align}
        & \left( w\left( p_0+\psi \right) -w\left( p_0-\psi \right) \right) u\left( \mu +\pi _{\text{rank}} \right) = \notag \\
        &\ \ \ \ \ \ \ \ \ \ \ \ \sum_{i=1}^n{\left( w\left( p_0-\psi +i\cdot \frac{2\psi}{n} \right) -w\left( p_0-\psi +\left( i-1 \right) \cdot \frac{2\psi}{n} \right) \right) u\left( \mu +\xi _i \right)} \label{indifferentRDU}
    \end{align}

    Following \textcolor{blue}{\cite{eeckhoudt2022dual}}, we apply second-order Taylor expansion to $u\left( \mu +\xi \right)$ at $\mu$, $w\left( p_0-\psi +i\cdot \frac{2\psi}{n} \right)$ and $w\left( p_0-\psi +\left( i-1 \right) \cdot \frac{2\psi}{n} \right)$ at $\mu$, $w\left( p_0+\psi \right)$ and $w\left( p_0-\psi \right)$ at $p_0$, and first-order Taylor expansion to  $u\left( \mu +\pi _{\text{rank}} \right)$ at $\mu$. Then, we can obtain the solution to the above equation as
    \begin{align}
        2w'\left( p_0 \right) \psi & \left( u\left( \mu \right) +u'\left( \mu \right) \pi _{\text{rank}} \right) = \notag\\
        &\sum_{i=1}^n{\left( w'\left( p_0 \right) \frac{2\psi}{n}+\frac{1}{2}w''\left( p_0 \right) \left( -\frac{4\psi ^2}{n}+\frac{4\psi ^2\left( 2i-1 \right)}{n^2} \right) \right) \left( u\left( \mu \right) +u'\left( \mu \right) \xi _i+\frac{u''\left( \mu \right)}{2}\xi _{i}^{2} \right)} \notag
    \end{align}
    After simplification (see Appendix for the detailed derivations), we can have:
    $$
    \pi _{\text{rank}}=\,\frac{m_2}{2}\,\frac{u''\left( \mu \right)}{u'\left( \mu \right)}+\frac{\overline{m}_2}{2}\,\frac{1}{2\psi}\frac{w''\left( p_0 \right)}{w'\left( p_0 \right)}\,+\frac{1}{2}\cdot \frac{1}{2}\cdot \frac{1}{2\psi}\,\left( \frac{w''\left( p_0 \right)}{w'\left( p_0 \right)} \right) \left( \frac{u''\left( \mu \right)}{u'\left( \mu \right)} \right) \left( \overline{m}_{2}^{2}-m_2 \right)
    $$
    where $m_2$, $\overline{m}_{2}$ and $\overline{m}_{2}^{2}$ are primal second moment, second dual moment about mean, and second dual moment about variance defined in Section \ref{sec:DualMoments}. Now, we consider adding variability to all possible times, and thus we have $2\psi = 1$. Then, the resulting expression is given in Eq. \eqref{RP_local approxiamtion_RDU}.
    $\square$
\end{proof}

Note that $COT$ and $COTV$ under RDU can be estimated by following the same process used to estimate $VOT$ and $COTV$ under EU in Section \ref{sec:COTandCOTV_EU}. Consequently, we have Proposition \ref{RDURatio}.
\begin{proposition}\label{RDURatio}
    For RDU risk-averse users, the ratio depends on CV of random times and dual risk aversion parameters only:
    \begin{equation} \label{generalRatio_RDU}
        \rho \left( i,t,\mu \right) =\tau _h\left( \frac{1}{2}\,\frac{m_2}{\mu ^2}\,\mu \frac{u''\left( \mu \right)}{u'\left( \mu \right)}+\frac{1}{2}\frac{\overline{m}_2}{\mu ^2}R_{2}^{\text{dual}}\left( p_0 \right) \,+\frac{1}{2}\left( \frac{w''\left( p_0 \right)}{w'\left( p_0 \right)} \right) \cdot \frac{1}{2}R_2\left( \mu \right) \frac{\overline{m}_{2}^{2}-m_2}{\mu ^2} \right)  \notag  
    \end{equation}
    where $\tau _h$ is the parameter of the $VOT(\mu)$ to the $VOT(\mu_h)$.
\end{proposition}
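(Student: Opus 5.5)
We follow the EU argument of Proposition \ref{rho_local approxiamtion}, now using the RDU representation in Eq. \eqref{EU_RDU}. The plan is to obtain COTV as the variability premium $\pi_{\text{rank}}$ from Proposition \ref{RP_RDU} multiplied by the pointwise value of time, obtain COT as an average VOT times $\mu$, and then take the ratio.

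\textbf{Step 1: COTV.} Write the RDU counterpart of Eq. \eqref{EU_U_COTV}:
\[
-\varphi c+\int u(t)\,dw(F(t)) = -\varphi(c+COTV)+u(\mu).
\]
By the indifference condition \eqref{indifferentRDU} with $2\psi=1$, the integral equals $u(\mu+\pi_{\text{rank}})$. A first-order expansion around $\mu$ then gives
\[
COTV \simeq -\frac{u'(\mu)}{\varphi}\,\pi_{\text{rank}} = VOT(\mu)\,\pi_{\text{rank}}.
\]
This mirrors Eq. \eqref{COTV_formulation} and Eq. \eqref{COTVDual}.

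\textbf{Step 2: COT.} Repeat the argument of Eq. \eqref{EU_U_VOT} under the RDU functional. Shifting every realization by $\Delta t$ leaves the ranks, and hence the weights $dw(F)$, unchanged. Therefore
\[
VOT = -\frac{1}{\varphi}\int u'(t)\,dw(F(t)).
\]
Define the rank-weighted average marginal utility as $u'(\mu_h)$, where $\mu_h$ is the point at which the weighted average of $u'$ is attained (for instance by a mean-value argument, or by the Taylor approximation used for $g_2$). Then $VOT = VOT(\mu_h)$ and $COT = VOT(\mu_h)\,\mu$.

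\textbf{Step 3: the ratio.} Dividing,
\[
\rho = \frac{VOT(\mu)}{VOT(\mu_h)}\cdot\frac{\pi_{\text{rank}}}{\mu} = \tau_h\,\frac{\pi_{\text{rank}}}{\mu}.
\]
Substitute the three terms of $\pi_{\text{rank}}$ from Proposition \ref{RP_RDU} and rescale each by $\mu/\mu^2$:
\begin{itemize}
\item The first term becomes $\tfrac12 (m_2/\mu^2)\,\mu u''(\mu)/u'(\mu)$, i.e. $\tfrac12 CV^2 R_2(\mu)$.
\item The second term becomes $\tfrac12(\overline m_2/\mu^2)\,R_2^{\text{dual}}(p_0)$, with $R_2^{\text{dual}}(p_0)=\mu\,w''(p_0)/w'(p_0)$, matching the notation of Proposition \ref{DualRatio}.
\item The cross term becomes $\tfrac12\,\frac{w''(p_0)}{w'(p_0)}\cdot\tfrac12 R_2(\mu)\,(\overline m_2^2-m_2)/\mu^2$.
\end{itemize}
Collecting these three terms yields the stated expression.

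\textbf{Main obstacle.} The delicate step is Step 2, namely justifying a meaningful $\tau_h$. Under RDU the average VOT is $u'$ integrated against the distorted measure $dw(F)$, and its Taylor expansion in the variability scale mixes the $u'''$ correction of Eq. \eqref{VOTusingVOTT} with $w''$ terms. I would avoid expanding it and simply define $\mu_h$ implicitly through $u'(\mu_h)=\int u'(t)\,dw(F(t))$. This is well defined when $u'$ is continuous and monotone on $[T_{\min},T_{\max}]$, by the intermediate value theorem, because $dw(F)$ is a probability measure. All dependence on higher-order terms is thereby absorbed into $\tau_h$. If a more explicit $\tau_h$ is desired, one would instead expand as in Eq. \eqref{VOTusingVOTT} and keep terms up to second order.
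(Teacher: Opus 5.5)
Your proposal is correct and follows essentially the same route as the paper. Both arguments obtain $COTV \simeq \pi_{\text{rank}}\,VOT(\mu)$ from the RDU indifference condition with $2\psi=1$ and a first-order expansion, write $COT = \mu\,VOT(\mu_h)$, and substitute the three terms of Proposition \ref{RP_RDU} into $\rho=\tau_h\,\pi_{\text{rank}}/\mu$. The one small difference is that the paper expands the average VOT to second order around the $w$-weighted mean $\mu_w$ and then simply names the result $VOT(\mu_h)$, whereas you define $\mu_h$ exactly through $u'(\mu_h)=\int u'(t)\,dw(F(t))$ and the intermediate value theorem, which gives a slightly cleaner justification of $\tau_h$.
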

\begin{proof}
    {Proof.} For random times $t$ under RDU framework, we let $t=\mu _w+\sigma _wy$ where $\mu_w$ and $\sigma_w$ are mean and variance under $w(F(t))$. Then, according to the proof of Proposition \ref{rho_local approxiamtion}, we have
    \begin{align} \label{VOT_COTV_rank}
        & VOT\approx g_w\left( 0 \right) +\sum\limits_{l=1}^2{\frac{g_{w}^{\left( l \right)}\left( 0 \right)}{l!}\sigma ^l}=VOT\left( \mu _w \right) -1/2\sigma _{w}^{2}\frac{u^{'''}\left( \mu _w \right)}{\varphi} \notag\\
        & COTV=\frac{u\left( \mu \right) -\int{u\left( t \right) dw\left( F\left( t \right) \right)}}{\varphi} \notag
    \end{align}

    Note that $\int{u\left( t \right) dw\left( F\left( t \right) \right)}$ is right-hand side of Eq. \eqref{indifferentRDU}. By invoking first order Taylor expansion, the COTV is further written as
    $$
    COTV=\frac{u\left( \mu \right) -\int{u\left( t \right) dw\left( F\left( t \right) \right)}}{\varphi} \approx \frac{u\left( \mu \right) -u\left( \pi _{\text{rank}}+\mu \right)}{\varphi}=\pi _{\text{rank}}VOT\left( \mu \right) 
    $$
    Let $\tau _h$ be the parameter of the $VOT(\mu)$ to the $VOT(\mu_h)$. The ratio of $\rho_{rank} \left( i,t,\mu \right)$ under RDU can be reformulated as
    \begin{align*}
        \rho \left( i,t,\mu \right) &=\frac{\pi _{\text{rank}}VOT\left( \mu \right)}{\mu VOT\left( \mu _h \right)} \\
        & =\tau _h\frac{1}{\mu}\,\frac{m_2}{2}\,\frac{u''\left( \mu \right)}{u'\left( \mu \right)}+\frac{\overline{m}_2}{2}\,\frac{1}{2\varepsilon}\frac{w''\left( p_0 \right)}{w'\left( p_0 \right)}\,+\frac{1}{2}\cdot \frac{1}{2}\cdot \frac{1}{2\varepsilon}\,\left( \frac{w''\left( p_0 \right)}{w'\left( p_0 \right)} \right) \left( \frac{u''\left( \mu \right)}{u'\left( \mu \right)} \right) \left( \overline{m}_{2}^{2}-m_2 \right) \\
        & =\tau _h\left( \frac{1}{2}\,\frac{m_2}{\mu ^2}R_2\left( \mu \right) \,+\frac{1}{2}\frac{\overline{m}_2}{\mu ^2}R_{2}^{\text{dual}}\left( p_0 \right) \,+\frac{1}{2}\left( \frac{w''\left( p_0 \right)}{w'\left( p_0 \right)} \right) \cdot \frac{1}{2}R_2\left( \mu \right) \frac{\overline{m}_{2}^{2}-m_2}{\mu ^2} \right) 
    \end{align*}
   This completes the proof. $\square$
\end{proof}

Proposition \ref{RDURatio} demonstrates that, for a RDU risk-averse user, the result of $\rho _{\text{rank}}\left( i,t,\mu \right)$ is also structurally similar to that under EU framework. In particular, it depends on the degree of time variability and users’ preferences. Specifically, both the first and second primal moments and the first and second dual moments jointly determine the COTV for mobility services. Accordingly, both users’ preference in terms of utility function and “weighting” function of perception about probability are important.

We can thus observe the relationship among the results of the ratio $\rho = COVT/COT$ between EU, DT and RDU frameworks according to Propositions \ref{generalRatio}, \ref{DualRatio} and \ref{RDURatio}. For RDU users characterized by Eq. \eqref{EU_RDU}, if we use $F(t)$ to replace $w(F(t))$, then RDU would reduce to EU model. Accordingly, the result in Proposition \ref{RDURatio} reduces to Proposition \ref{generalRatio}. Alternatively, if we use $t$ to replace $u(t)$ in Eq. \eqref{EU_RDU}, then RDU would reduce to DT model. Accordingly, the result in Proposition \ref{RDURatio} reduces to Proposition \ref{DualRatio}. In short, in either DT or EU model, the last term in Proposition \ref{RDURatio} under RDU framework, which captures the intersection of DT and EU in measuring time variability and users' preference, vanishes.

\section{Conclusion}
This paper investigates the economic consequences of time variability in mobility services, with the aim of clarifying how users are affected by uncertainty in service times. Our analysis shows that time variability does not affect users arbitrarily: its economic impact follows a structured and predictable pattern jointly shaped  by service time variability and behavioral attitudes toward risk. 

In a commonly adopted special case—quadratic utility combined with a Poisson service process—we showed that the total user cost under time variability does not exceed $3/2$ of the deterministic benchmark. More generally, without assumptions on the service process, variability costs scale with $\frac{1}{2}CV^2$ relative to time costs. These results indicate that substantial welfare losses arise primarily when variability is large relative to mean service time, whereas moderate variability leads to proportionate and predictable impacts. Under quadratic utility, we further found that the marginal benefit of reducing variability is equal to that of reducing mean time. This finding provides theoretical support for the claim commonly used in the literature \textcolor{blue}{\cite[see e.g.,][]{chen2023conservative, zang2024relia}} highlighting the importance of considering the COTV: users value time variability reductions as much as, if not more than, they value time savings. 

Beyond this special case, we show that the ratio of COTV to COT depends on the coefficient of variation (CV) of  times and on users’ higher-order risk preferences, captured by relative risk aversion (RRA) and relative prudence (RP). Benchmark values of these parameters provide a useful basis for distinguishing how users trade off reductions in mean, variance, and skewness. When sensitivity to higher-order risk becomes pronounced, diminishing marginal effects emerge: for risk-averse and prudent users, the marginal benefit of reducing time variability is bounded by $1/2$ of the marginal benefit of reducing mean time. Finally, extending the analysis to non–EU models, including DT and RDU, shows that these insights are robust across alternative behavioral representations. Although variability and preferences are modeled differently, the structural dependence of variability costs on relative dispersion and risk attitudes remains intact.

From a practical standpoint, characterizing the maximum economic loss associated with time variability offers a useful complement to existing valuation methods, particularly in settings where detailed data are limited and decisions must be made at an early stage. Moreover, the derived bounds provide a principled upper limit on users’ willingness to pay for reliability improvements, offering guidance for investment prioritization and the pricing and design of reliability-oriented mobility services.

\section*{Appendix}
To solve the solution in Proposition \ref{RP_RDU}, we have the following equation:
\begin{equation}
\begin{split}
    u\left( \mu \right) + & u'\left( \mu \right) \pi _{\text{rank}} = \frac{1}{2w'\left( p_0 \right) \varphi}\,\sum_{i=1}^n{\frac{2\varphi}{n}w'\left( p_0 \right) \left( u\left( \mu \right) +u'\left( \mu \right) \xi _i+\frac{u''\left( \mu \right)}{2}\xi _{i}^{2} \right)} \\
    & +\frac{1}{2w'\left( p_0 \right) \varphi}\,\sum_{i=1}^n{\frac{1}{2}w''\left( p_0 \right) \left( -\frac{4\varphi ^2}{n}+\frac{4\varphi ^2\left( 2i-1 \right)}{n^2} \right) \left( u\left( \mu \right) +u'\left( \mu \right) \xi _i+\frac{u''\left( \mu \right)}{2}\xi _{i}^{2} \right)} \notag
\end{split}
\end{equation}

The first term of right hand side of the above formulation can be simplified as
\begin{equation}
    \frac{1}{2w'\left( p_0 \right) \varphi}\,\sum_{i=1}^n{\frac{2\varphi}{n}w'\left( p_0 \right) \left( u\left( \mu \right) +u'\left( \mu \right) \xi _i+\frac{u''\left( \mu \right)}{2}\xi _{i}^{2} \right)}=u\left( \mu \right) \,\,+0+\sum_{i=1}^n{\frac{1}{n}\frac{u''\left( \mu \right)}{2}\xi_i^2} \notag
\end{equation}
Then, we have 
\begin{equation}
    u'\left( \mu \right) \pi _{\text{rank}}=\,\,\frac{u''\left( \mu \right)}{2n}\sum_{i=1}^n{\xi _{i}^{2}}+\,\frac{1}{2w'\left( p_0 \right) \varphi}\,\sum_{i=1}^n{\frac{1}{2}w''\left( p_0 \right) \left( -\frac{4\varphi ^2}{n}+\frac{4\varphi ^2\left( 2i-1 \right)}{n^2} \right) \left( u\left( \mu \right) +u'\left( \mu \right) \xi _i+\frac{u''\left( \mu \right)}{2}\xi _{i}^{2} \right)} \notag
\end{equation}
Note that the mean of $\xi$ is $0$ and we have $\sum_{i=1}^n{-\frac{4\varphi ^2}{n}+\frac{4\varphi ^2\left( 2i-1 \right)}{n^2}}=0$  and $\sum_{i=1}^n{\frac{4\varphi ^2\left( 2i-1 \right)}{n^2}\xi _i}=\overline{m}_2$, we have 
\begin{equation}
    \pi _{\text{rank}}=\,\frac{1}{2}\,\frac{u''\left( \mu \right)}{u'\left( \mu \right)}m_2 + \frac{\overline{m}_2}{2}\,\frac{1}{2\varphi}\frac{w''\left( p_0 \right)}{w'\left( p_0 \right)}\,+\frac{1}{2}\cdot \frac{1}{2}\,\frac{1}{2\varphi}\left( \frac{w''\left( p_0 \right)}{w'\left( p_0 \right)} \right) \left( \frac{u''\left( \mu \right)}{u'\left( \mu \right)} \right) \,\sum_{i=1}^n{\left( -\frac{4\varphi ^2}{n}+\frac{4\varphi ^2\left( 2i-1 \right)}{n^2} \right) \xi _{i}^{2}} \notag
\end{equation}

Referring to Eq. \eqref{eqDualmoment_mean}, we have 
\begin{equation}
    \overline{m}_{2}^{2}=\sum_{i=1}^n{\xi _i^2}p\left( \xi _i \right) \left( 2\sum_{j=1}^i{p\left( \xi _j \right)} \right) =\frac{4\psi ^2}{n^2}\sum_{i=1}^n{\xi _{i}^{2}}2i \notag
\end{equation}

Finally, we have
\begin{equation*}
    \pi _{\text{rank}}=\,\frac{1}{2}\,\frac{u''\left( \mu \right)}{u'\left( \mu \right)}m_2+\frac{\overline{m}_2}{2}\,\frac{1}{2\varphi}\frac{w''\left( p_0 \right)}{w'\left( p_0 \right)}\,+\frac{1}{2}\cdot \frac{1}{2}\,\frac{1}{2\varphi}\left( \frac{w''\left( p_0 \right)}{w'\left( p_0 \right)} \right) \left( \frac{u''\left( \mu \right)}{u'\left( \mu \right)} \right) \,\left( -m_2+\overline{m}_{2}^{2} \right)
\end{equation*}

\section*{Acknowledgement}
This research is supported by the National Research Foundation, Singapore under its AI Singapore Pro-
gramme (AISG Award No:AISG3-RP-2022-031).

\begingroup \parindent 0pt \parskip 0.0ex \def\enotesize{\normalsize} \theendnotes \endgroup

\bibliographystyle{informs2014}
\bibliography{reference}

\end{document}